\documentclass{article}
\usepackage{amsfonts}
\usepackage{amsmath}

\setcounter{MaxMatrixCols}{10}

\newtheorem{theorem}{Theorem}

\newtheorem{corollary}[theorem]{Corollary}

\newtheorem{definition}[theorem]{Definition}
\newtheorem{example}[theorem]{Example}

\newtheorem{lemma}[theorem]{Lemma}

\newtheorem{proposition}[theorem]{Proposition}

\newenvironment{proof}[1][Proof]{\noindent\textbf{#1.} }{\ \rule{0.5em}{0.5em}}
\input{tcilatex}

\begin{document}

\title{On cyclic DNA codes over the rings $Z_{4}+wZ_{4}$ and $%
Z_{4}+wZ_{4}+vZ_{4}+wvZ_{4}$}
\author{Abdullah Dertli$^{a}$, Yasemin Cengellenmis$^{b}$ \and $\left(
a\right) $ Ondokuz May\i s University, Faculty of Arts and Sciences, \and %
Mathematics Department, Samsun, Turkey \and abdullah.dertli@gmail.com \and $%
\left( b\right) $ Trakya University, Faculty of Arts and Sciences, \and %
Mathematics Department, Edirne, Turkey \and ycengellenmis@gmail.com}
\maketitle

\begin{abstract}
The structures of cyclic DNA codes of odd length over the finite rings $%
R=Z_{4}+wZ_{4}$, $w^{2}=2$ and $%
S=Z_{4}+wZ_{4}+vZ_{4}+wvZ_{4},w^{2}=2,v^{2}=v,wv=vw$ are studied. The links
between the elements of the rings $R$, $S$ and 16 and 256 codons are
established, respectively. Cyclic codes of odd length over the finite ring $%
R $ satisfies reverse complement constraint and cyclic codes of odd length
over the finite ring $S$ satisfy reverse constraint and reverse complement
constraint\ are studied. Binary images of the cyclic DNA codes over the
finite rings $R$ and $S$ are determined. Moreover, a family of DNA skew
cyclic codes over $R$ is constructed, its property of being reverse
complement is studied.
\end{abstract}

\section{Introduction}

DNA is formed by the strands and each strands is sequence consists of four
nucleotides ; adenine (A), guanine (G), thymine (T) and cytosine (C). Two
strands of DNA are linked with Watson-Crick Complement. This is as $%
\overline{A}=T$, $\overline{T}=A$, $\overline{G}=C$, $\overline{C}=G$. For
example if $c=\left( ATCCG\right) $ then its complement is $\overline{c}%
=\left( TAGGC\right) $.

A code is called DNA codes if it satisfies some or all of the following
conditions;

\ \ \ \ i) \ The Hamming contraint, for any two different codewords $%
c_{1},c_{2}\in C,$ $H(c_{1},c_{2})\geq d$

\ \ \ ii) \ The reverse constraint, for any two different codewords $%
c_{1},c_{2}\in C,$ $H(c_{1},c_{2}^{r})\geq d$

\ \ \ iii) The reverse complement constraint, for any two different
codewords $c_{1},c_{2}\in C,$ $H(c_{1},c_{2}^{rc})\geq d$

\ \ \ iv) The fixed GC content constraint, for any codeword $c\in C$
contains the some number of G and C element.

The purpose of the i-iii contraints is to avoid undesirable hybridization
between different strands.

DNA computing were started by Leonhard Adleman in 1994, in [3]. The special
error correcting codes over some finite fields and finite rings with $4^{n}$
elements where $n\in N$ were used for DNA computing applications.

In [12], the reversible codes over finite fields were studied, firstly. It
was shown that $C=\left\langle f(x)\right\rangle $ is reversible if and only
if $f(x)$ is a self reciprocal polynomial. In [1], they developed the theory
for constructing linear and additive cyclic codes of odd length over $GF(4)$%
. In [13], they introduced a new family of polynomials which generates
reversible codes over a finite field $GF(16).$

In [2], the reversible cyclic codes of any length $n$ over the ring $Z_{4}$
were studied. A set of generators for cyclic codes over $Z_{4}$ with no
restrictions on the length $n$ was found. In [17], cyclic DNA codes over the
ring $R=\{0,1,u,1+u\}$ where $u^{2}=1$ based on a similarity measure were
constructed. In [9], the codes over the ring $F_{2}+uF_{2},u^{2}=0$ were
constructed for using in DNA computing applications.

I. Siap et al. considered cyclic DNA codes over the finite ring $%
F_{2}[u]/\left\langle u^{2}-1\right\rangle $ in [18]. In [10], Liang and
Wang considered cyclic DNA codes over $F_{2}+uF_{2},u^{2}=0.$ Y\i ld\i z and
Siap studied cyclic DNA codes over $F_{2}[u]/\left\langle
u^{4}-1\right\rangle $ in [19]. Bayram et al. considered codes over the
finite ring $F_{4}+vF_{4},v^{2}=v$ in [3]. Zhu and Chan studied cyclic DNA\
codes over the non-chain ring $F_{2}[u,v]/\left\langle
u^{2},v^{2}-v,uv-vu\right\rangle $ in [20]. In [5], Bemenni at al. studied
cyclic DNA codes over $F_{2}[u]/\left\langle u^{6}\right\rangle $.
Pattanayak et al. considered cyclic DNA codes over the ring $%
F_{2}[u,v]/<u^{2}-1,v^{3}-v,uv-vu>$ in [15]. Pattanayak and Singh studied
cyclic DNA codes over the ring $Z_{4}+uZ_{4},u^{2}=0$ in [14].

J. Gao et al. studied the construction of cyclic DNA codes by cyclic codes
over the finite ring $F_{4}[u]/\left\langle u^{2}+1\right\rangle $, in [11].
Also, the construction of DNA cyclic codes have been discussed by several
authors in [7,8,16].

We study families of DNA cyclic codes of the finite rings $Z_{4}+wZ_{4}$, $%
w^{2}=2$ and $Z_{4}+wZ_{4}+vZ_{4}+wvZ_{4},w^{2}=2,v^{2}=v,wv=vw.$ The rest
of the paper is organized as follows. In section 2, details about algebraic
structure of the finite ring $Z_{4}+wZ_{4}$, $w^{2}=2$ are given. We define
a Gray map from $R$ to $Z_{4}$. In section 3, cyclic codes of odd length
over $R$ satisfies the reverse complement constraint are determined. In
section 4, cyclic codes of odd length over $S$ satisfy the reverse
complement constraint and the reverse contraint are examined. A linear code
over $S$ is represented by means of two linear codes over $R$. In section 5,
the binary image of cyclic DNA code over $R$ is determined. In section 6,
the binary image of cyclic DNA code over $S$ is determined. In section 7, by
using a non trivial automorphism, the DNA skew cyclic codes are introduced.
In section 8, the design of linear DNA code is presented.

\section{Preliminaries}

The algebraic structure of the finite ring $R=Z_{4}+wZ_{4}$, $w^{2}=2$ is
given\ in [6]. $R$ is the commutative, characteristic 4 ring $%
Z_{4}+wZ_{4}=\{a+wb:a,b\in Z_{4}\}$ with $w^{2}=2$. $R$ can also be thought
of as the quotient ring $Z_{4}[w]/\left\langle w^{2}-2\right\rangle $. $R$
is principal ideal ring with 16 elements and finite chain ring. The units of
the ring are%
\begin{equation*}
1,3,1+w,3+w,1+2w,1+3w,3+3w,3+2w
\end{equation*}

and the non units are 
\begin{equation*}
0,2,w,2w,3w,2+w,2+2w,2+3w
\end{equation*}

$R$ has 4 ideals in all

\begin{eqnarray*}
\left\langle 0\right\rangle &=&\{0\} \\
\left\langle 1\right\rangle &=&\left\langle 3\right\rangle =\left\langle
1+3w\right\rangle =...=R \\
\left\langle w\right\rangle &=&\{0,2,w,2w,3w,2+w,2+2w,2+3w\} \\
&=&\left\langle 3w\right\rangle =\left\langle 2+w\right\rangle =\left\langle
2+3w\right\rangle \\
\left\langle 2w\right\rangle &=&\{0,w\} \\
\left\langle 2\right\rangle &=&\left\langle 2+2w\right\rangle
=\{0,2,2w,2+2w\}
\end{eqnarray*}

\begin{equation*}
\left\langle 0\right\rangle \subset \left\langle 2w\right\rangle \subset
\left\langle 2\right\rangle \subset \left\langle w\right\rangle \subset R
\end{equation*}%
Moreover $R$ is Frobenious ring.

We define%
\begin{eqnarray*}
\phi &:&R\longrightarrow Z_{4}^{2} \\
\phi \left( a+wb\right) &=&\left( a,b\right)
\end{eqnarray*}%
This Gray map is extended component wise to 
\begin{eqnarray*}
\phi &:&R^{n}\longrightarrow Z_{4}^{2n} \\
\left( \alpha _{1},\alpha _{2},...,\alpha _{n}\right)
&=&(a_{1},...,a_{n},b_{1},...,b_{n})
\end{eqnarray*}%
where $\alpha _{i}=a_{i}+b_{i}w$ with $i=1,2,...,n.$ $\phi $ is $Z_{4}$
module isomorphism.

A linear code $C$ of length $n$ over $R$ is a $R$-submodule of $R^{n}.$ An
element of $C$ is called a codeword. A code of length $n$ is cyclic if the
code invariant under the automorphism $\sigma $ which 
\begin{equation*}
\sigma \left( c_{0},c_{1},...,c_{n-1}\right) =\left(
c_{n-1},c_{0},...,c_{n-2}\right)
\end{equation*}

A cyclic code of length $n$ over $R$ can be identified with an ideal in the
quotient ring $R[x]/\left\langle x^{n}-1\right\rangle $ via the $R$--modul
isomorphism%
\begin{eqnarray*}
R^{n} &\longrightarrow &R[x]/\left\langle x^{n}-1\right\rangle \\
\left( c_{0},c_{1},...,c_{n-1}\right) &\longmapsto
&c_{0}+c_{1}x+...+c_{n-1}x^{n-1}+\left\langle x^{n}-1\right\rangle
\end{eqnarray*}

\begin{theorem}
Let $C$ be a cyclic code in $R[x]/\left\langle x^{n}-1\right\rangle .$Then
there exists polynomials $g(x),a(x)$ such that $a(x)|g(x)|x^{n}-1$ and $%
C=\left\langle g(x),wa(x)\right\rangle .$

The ring $R[x]/\left\langle x^{n}-1\right\rangle $ is a principal ideal ring
when $n$ is odd. So, if $n$ is odd, then there exist $s(x)\in
R[x]/\left\langle x^{n}-1\right\rangle $ such that $C=\left\langle
s(x)\right\rangle .$
\end{theorem}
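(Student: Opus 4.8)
The plan is to establish both assertions by the reduction-and-lifting technique for ideals over a finite chain ring, exploiting that $R$ is a chain ring whose unique maximal ideal is $\langle w\rangle $ and whose residue field is $R/\langle w\rangle \cong Z_{2}$ (note $2=w^{2}\in \langle w\rangle $). Throughout, $C$ is regarded as an ideal of $R[x]/\langle x^{n}-1\rangle $. I begin with the reduction homomorphism $\mu :R[x]/\langle x^{n}-1\rangle \rightarrow Z_{2}[x]/\langle x^{n}-1\rangle $ induced by $a+wb\mapsto a\bmod 2$. Then $\mu (C)$ is an ideal of the principal ideal ring $Z_{2}[x]/\langle x^{n}-1\rangle $, so $\mu (C)=\langle \bar{g}(x)\rangle $ with $\bar{g}(x)\mid x^{n}-1$ over $Z_{2}$. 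I would pick a monic lift $g(x)\in C$ of $\bar{g}(x)$ of the same degree and, by dividing $x^{n}-1$ by $g$ over $R$ and reducing modulo $w$, show $g(x)\mid x^{n}-1$ in $R[x]$; this supplies the first generator.

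To obtain the second generator I would pass to the $w$-torsion of $C$. Let $J=\{\mu (f):f\in R[x]/\langle x^{n}-1\rangle ,\ wf\in C\}$, an ideal of $Z_{2}[x]/\langle x^{n}-1\rangle $, hence $J=\langle \bar{a}(x)\rangle $ with $\bar{a}\mid x^{n}-1$; lift to $a(x)$ with $a\mid x^{n}-1$. Since $g\in C$ gives $wg\in C$ and thus $\bar{g}\in J$, one gets $\bar{a}\mid \bar{g}$, i.e. $a(x)\mid g(x)$, yielding the chain $a\mid g\mid x^{n}-1$. It then remains to prove $C=\langle g(x),wa(x)\rangle $; the inclusion $\supseteq $ is clear, and for $\subseteq $ I would reduce an arbitrary $c\in C$ modulo $g$, observe via $\mu $ that the remainder lies in $\ker \mu $, and finish using the definition of $a$.

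The main obstacle is exactly this final reduction, because $R$ is \emph{not} of nilpotency index $2$: the chain $\langle 0\rangle \subset \langle 2w\rangle \subset \langle 2\rangle \subset \langle w\rangle \subset R$ shows $w$ has index $4$, so $\ker \mu =w\,R[x]/\langle x^{n}-1\rangle $ is not a $Z_{2}$-space but again carries a chain structure ($wR\cong R/\langle w^{3}\rangle $). The faithful structure theorem for a chain ring of index $4$ therefore produces a filtration of residue codes $\mu (C)=C_{0}\supseteq C_{1}\supseteq C_{2}\supseteq C_{3}$ together with four generators $g,\,wg_{1},\,2g_{2},\,2wg_{3}$ and nested divisibilities; the substance of the claimed two-generator form is to show that the $w^{2}$- and $w^{3}$-layers are absorbed into $\langle g,wa\rangle $. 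I would analyse these layers explicitly and check that the nested divisibility collapses them, taking particular care when $n$ is even, where this collapse is the delicate point.

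For the last assertion I would use the Chinese Remainder Theorem. When $n$ is odd, $\gcd (n,2)=1$, so $x^{n}-1$ is separable over $Z_{2}$ and splits into distinct monic irreducibles; by Hensel's lemma these lift uniquely to pairwise coprime basic irreducible factors $f_{1},\dots ,f_{r}$ of $x^{n}-1$ over $R$. Hence $R[x]/\langle x^{n}-1\rangle \cong \prod_{i=1}^{r}R[x]/\langle f_{i}(x)\rangle $, and each factor is a finite local chain ring with residue field $Z_{2}[x]/\langle \bar{f}_{i}(x)\rangle $ and maximal ideal generated by $w$, so its ideals are the powers $\langle w^{j}\rangle $ and are principal. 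Since a product of principal ideals is generated by the single tuple of their generators, every ideal of the product is principal; pulling the generator back through the isomorphism yields $s(x)$ with $C=\langle s(x)\rangle $.
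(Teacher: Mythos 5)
Your argument for the second assertion is correct and essentially complete: for odd $n$ the polynomial $x^{n}-1$ is separable over the residue field $Z_{2}$, Hensel's lemma over the finite chain ring $R$ lifts its factorization to pairwise coprime basic irreducible factors $f_{1},\dots ,f_{r}$, the Chinese Remainder Theorem gives $R[x]/\langle x^{n}-1\rangle \cong \prod_{i}R[x]/\langle f_{i}\rangle $, each factor is a finite local ring whose maximal ideal is generated by $w$ and is therefore a chain ring with ideals $\langle w^{j}\rangle $, and a finite product of principal ideal rings is again a principal ideal ring. You should also know that the paper states this theorem without any proof at all (it is quoted as a known structure theorem, presumably from reference [6]), so there is nothing of theirs to compare your argument against.

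The genuine gap is exactly the step you postponed: showing that the $w^{2}$- and $w^{3}$-layers of the four-generator chain-ring presentation collapse into $\langle g(x),wa(x)\rangle $. This cannot be done, because the first assertion of the theorem is false --- and not only for even $n$, where you expected the delicacy, but for every $n$, already for $n=1$. Consider $C=\langle 2\rangle =\langle w^{2}\rangle $ in $R[x]/\langle x^{n}-1\rangle $ and suppose $C=\langle g,wa\rangle $ with $a\mid g\mid x^{n}-1$. Reducing mod $w$ annihilates $C$ (because $2=w^{2}$), so $x^{n}-1$ divides $\bar{g}$ in $Z_{2}[x]$; since $\bar{g}$ also divides $x^{n}-1$, we get $\bar{g}=x^{n}-1$, the cofactor $h$ in $gh=x^{n}-1$ satisfies $\bar{h}=1$, hence $h$ is a unit of $R[x]$ (one plus a nilpotent), and $g=(x^{n}-1)h^{-1}\equiv 0$ in the quotient, leaving $C=\langle wa\rangle $. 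Next, writing $wa=2s+(x^{n}-1)t$ in $R[x]$ and reducing coefficients modulo $2R$ (so into $(Z_{2}[w]/\langle w^{2}\rangle )[x]$), comparison of the $w$-components forces $x^{n}-1$ to divide $\bar{a}$; since $\bar{a}$ also divides $x^{n}-1$, the same unit argument gives $a\equiv 0$, hence $C=\{0\}$, contradicting $2\neq 0$. For $n=1$ the failure is transparent: every divisor of $x-1$ in $R[x]$ reduces mod $w$ to $1$ or to $x+1$, hence is either a unit of $R[x]$ or $(x-1)$ times a unit, so $\langle g(1),wa(1)\rangle $ can only be $\{0\}$, $\langle w\rangle $, or $R$, never $\langle 2\rangle $ or $\langle 2w\rangle $. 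The two-generator form with $a\mid g\mid x^{n}-1$ is the normal form for chain rings of nilpotency index $2$ such as $Z_{4}$; in $R$ the element $w$ has index $4$, exactly as the filtration $\langle 0\rangle \subset \langle 2w\rangle \subset \langle 2\rangle \subset \langle w\rangle \subset R$ you cited shows, and codes such as $\langle 2\rangle $ live entirely in the layers that the claimed form cannot reach. Note this does not contradict the (true) second assertion: $\langle 2\rangle $ is principal, generated by $s(x)=2$; it simply admits no presentation of the claimed type. So the honest outcome of the analysis you sketched would have been a counterexample to the first assertion, not a proof of it.
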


\section{The reversible complement codes over R}

In this section, we study cyclic code of odd length over $R$ satisfies the
reverse complement constraint. Let $\{A,T,G,C\}$ represent the DNA alphabet.
DNA occurs in sequences with represented by sequences of the DNA alphabet.
DNA\ code of length $n$ is defined as a set of the codewords $\left(
x_{0},x_{1},...,x_{n-1}\right) $ where $x_{i}\in \{A,T,G,C\}.$ These
codewords must satisfy the four constraints which are mentioned in [20].

Since the ring $R$ is of cardinality 16, we define the map $\phi $ which
gives a one to one correspondence between the elements of $R$ and the 16
codons over the alphabet $\{A,T,G,C\}^{2}$ by using the Gray map as follows%
\begin{equation*}
\begin{tabular}{ccc}
Elements & Gray images & DNA double pairs \\ 
$0$ & $(0,0)$ & $AA$ \\ 
$1$ & $(1,0)$ & $CA$ \\ 
$2$ & $(2,0)$ & $GA$ \\ 
$3$ & $(3,0)$ & $TA$ \\ 
$w$ & $(0,1)$ & $AC$ \\ 
$2w$ & $(0,2)$ & $AG$ \\ 
$3w$ & $(0,3)$ & $AT$ \\ 
$1+w$ & $(1,1)$ & $CC$ \\ 
$1+2w$ & $(1,2)$ & $CG$ \\ 
$1+3w$ & $(1,3)$ & $CT$ \\ 
$2+w$ & $(2,1)$ & $GC$ \\ 
$2+2w$ & $(2,2)$ & $GG$ \\ 
$2+3w$ & $(2,3)$ & $GT$ \\ 
$3+w$ & $(3,1)$ & $TC$ \\ 
$3+2w$ & $(3,2)$ & $TG$ \\ 
$3+3w$ & $(3,3)$ & $TT$%
\end{tabular}%
\end{equation*}

The codons satisfy the Watson-Crick Complement.

\begin{definition}
For $x=\left( x_{0},x_{1},...,x_{n-1}\right) \in R^{n},$ the vector \ $%
\left( x_{n-1},x_{n-2},...,x_{1},x_{0}\right) $ is called the reverse of $x$
and is denoted by $x^{r}.$ A linear code $C$ of length $n$ over $R$ \ is
said to be reversible if $x^{r}\in C$ for \ every $x\in C.$

For $x=\left( x_{0},x_{1},...,x_{n-1}\right) \in R^{n},$ the vector \ $%
\left( \overline{x}_{0},\overline{x}_{1},...,\overline{x}_{n-1}\right) $ is
called the complement of $x$ and is denoted by $x^{c}.$ A linear code $C$ of
length $n$ over $R$ \ is said to be complement if $x^{c}\in C$ for \ every $%
x\in C.$

For $x=\left( x_{0},x_{1},...,x_{n-1}\right) \in R^{n},$ the vector \ $%
\left( \overline{x}_{n-1},\overline{x}_{n-2},...,\overline{x}_{1},\overline{x%
}_{0}\right) $ is called the reversible complement of $x$ and is denoted by $%
x^{rc}.$ A linear code $C$ of length $n$ over $R$ \ is said to be reversible
complement if $x^{rc}\in C$ for \ every $x\in C.$
\end{definition}

\begin{definition}
Let $f(x)=a_{0}+a_{1}x+...+a_{r}x^{r}$ with $a_{r}\neq 0$ be polynomial. The
reciprocal of $f(x)$ is defined as $f^{\ast }(x)=x^{r}f(\frac{1}{x}).$ It is
easy to see that $\deg f^{\ast }(x)\leq \deg f(x)$ and if $a_{0}\neq 0$,
then $\deg f^{\ast }(x)=\deg f(x).$ $f(x)$ is called a self reciprocal
polynomial if there is a constant $m$ such that $f^{\ast }(x)=mf(x).$
\end{definition}

\begin{lemma}
Let $f(x),g(x)$ be polynomials in $R[x].$ Suppose $\deg f(x)-\deg g(x)=m$
then,
\end{lemma}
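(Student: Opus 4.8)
The plan is to prove the two standard reciprocal identities that such a lemma records: that reciprocation is multiplicative, $(f(x)g(x))^{\ast}=f^{\ast}(x)g^{\ast}(x)$, and that for a sum of polynomials of unequal degree it obeys $(f(x)+g(x))^{\ast}=f^{\ast}(x)+x^{m}g^{\ast}(x)$, where $m=\deg f(x)-\deg g(x)$. Both will follow directly from the defining formula $h^{\ast}(x)=x^{\deg h}h(1/x)$ once the degrees are controlled.

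First I would handle the product. Writing $r=\deg f(x)$ and $s=\deg g(x)$, I would compute $f^{\ast}(x)g^{\ast}(x)=\bigl(x^{r}f(1/x)\bigr)\bigl(x^{s}g(1/x)\bigr)=x^{r+s}f(1/x)g(1/x)=x^{r+s}(fg)(1/x)$, and then recognize the right-hand side as $(fg)^{\ast}(x)$ provided $\deg(fg)=r+s$. This is a one-line substitution; the only thing to verify is the degree bookkeeping.

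Next I would treat the sum. Assuming $r>s$ so that $f(x)+g(x)$ still has degree $r$, I would write $(f+g)^{\ast}(x)=x^{r}(f+g)(1/x)=x^{r}f(1/x)+x^{r}g(1/x)$. The first term is exactly $f^{\ast}(x)$, and for the second I would factor $x^{r}=x^{m}x^{s}$ to obtain $x^{r}g(1/x)=x^{m}\bigl(x^{s}g(1/x)\bigr)=x^{m}g^{\ast}(x)$, which yields the claimed formula. The role of $m$ is simply to reconcile the padding exponent $r$ attached to the higher-degree polynomial with the lower exponent $s$ built into $g^{\ast}$.

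The main obstacle is that $R$ is not an integral domain: it has zero divisors (for instance $2\cdot 2=0$), so a priori $\deg(fg)$ can be strictly smaller than $\deg f(x)+\deg g(x)$ when the product of the two leading coefficients vanishes. I would therefore make explicit the hypothesis that the leading coefficients multiply to a nonzero element, equivalently $\deg(fg)=\deg f(x)+\deg g(x)$, which is exactly the situation in which $f^{\ast}(x)g^{\ast}(x)$ and $(fg)^{\ast}(x)$ share the degree $r+s$; under that assumption the computation above is valid verbatim. For the sum identity one needs only $\deg f(x)\neq\deg g(x)$ so that no leading-term cancellation occurs, and no zero-divisor issue arises there.
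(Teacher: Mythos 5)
The paper states this lemma with no proof whatsoever (it is recorded as a known fact and then invoked in Theorems 8 and 31), so there is no argument of the authors' to compare yours against; your direct computation from the definition $h^{\ast}(x)=x^{\deg h}\,h(1/x)$ is the standard one, and it is correct. More importantly, your caveat about zero divisors is not pedantry but a genuine correction: over $R$, part (i) as printed is false. For an explicit counterexample, take $f(x)=g(x)=2x+1$; then $f(x)g(x)=4x^{2}+4x+1=1$ in $R[x]$, so $(fg)^{\ast}(x)=1$, while $f^{\ast}(x)g^{\ast}(x)=(x+2)^{2}=x^{2}$. Your added hypothesis --- that the leading coefficients have nonzero product, equivalently $\deg(fg)=\deg f(x)+\deg g(x)$ --- is exactly what is needed to make the one-line substitution legitimate, and it costs the paper nothing: in every place the lemma is actually used, one factor either divides $x^{n}-1$ in $Z_{4}[x]$ or is monic, hence has a unit leading coefficient, so degree additivity holds automatically. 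Likewise, your restriction to $\deg f(x)>\deg g(x)$ in part (ii) is the right one; it rules out cancellation of leading terms (which can occur when $m=0$, e.g.\ $f=x+1$, $g=3x+1$ over $Z_{4}$ give $f+g=2$, so $(f+g)^{\ast}=2$ while $f^{\ast}+g^{\ast}=2x$) and is anyway implicit in the statement, since $x^{m}$ requires $m\geq 0$. The only stylistic remark: since the manipulation with $1/x$ takes place in the localization $R[x,x^{-1}]$, you might say so explicitly, but as $R$ is commutative this is routine. In short, you have supplied a proof where the paper offers none, and in doing so sharpened the statement to one that is actually true over a ring with zero divisors.
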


\ \ \ \ \ \ \ i) $(f(x)g(x))^{\ast }=f^{\ast }(x)g^{\ast }(x)$

\ \ \ \ \ \ \ ii) $(f(x)+g(x))^{\ast }=f^{\ast }(x)+x^{m}g^{\ast }(x)$

\begin{lemma}
For any $a\in R,$ we have $a+\overline{a}=3+3w.$
\end{lemma}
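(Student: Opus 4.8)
The plan is to read off the complement operation explicitly from the DNA double-pair correspondence in the table and then simply add. First I would observe that the Gray map $\phi(a+wb)=(a,b)$ together with the table identifies each coordinate in $Z_{4}$ with a single nucleotide via $0\leftrightarrow A$, $1\leftrightarrow C$, $2\leftrightarrow G$, $3\leftrightarrow T$. Indeed, the rows $0,1,2,3\mapsto AA,CA,GA,TA$ fix the first-coordinate rule, and the rows $w,2w,3w\mapsto AC,AG,AT$ fix the identical second-coordinate rule, so the dictionary is consistent across both coordinates.

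Next I would translate the Watson--Crick complement $\overline{A}=T$, $\overline{C}=G$, $\overline{G}=C$, $\overline{T}=A$ into this numbering. On $Z_{4}$ it sends $0\mapsto 3$, $1\mapsto 2$, $2\mapsto 1$, $3\mapsto 0$, that is, it is precisely the map $x\mapsto 3-x$. Since the complement of a DNA double pair is taken coordinatewise (each nucleotide is complemented independently), for $a=a_{0}+wa_{1}$ with Gray image $(a_{0},a_{1})$ the complement $\overline{a}$ has Gray image $(3-a_{0},3-a_{1})$, and hence $\overline{a}=(3-a_{0})+w(3-a_{1})$ as an element of $R$.

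Finally I would add the two elements, with all arithmetic carried out in $Z_{4}$: $a+\overline{a}=(a_{0}+3-a_{0})+w(a_{1}+3-a_{1})=3+3w$, which is the claimed identity. There is no genuine obstacle here; the only point requiring care is confirming that the nucleotide-to-$Z_{4}$ dictionary agrees on both coordinates and that the double-pair complement really acts coordinatewise, both of which are immediate from the table. If one prefers a more pedestrian route, the identity can instead be verified by inspecting all sixteen elements of $R$ directly (for instance $\overline{1+w}=2+2w$ and $(1+w)+(2+2w)=3+3w$), but recognizing the single pattern $\overline{x}=3-x$ collapses the whole lemma to one line.
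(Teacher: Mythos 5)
Your proof is correct. The paper states this lemma with no proof at all, so there is nothing to compare against: your argument --- reading off the nucleotide dictionary $0\leftrightarrow A$, $1\leftrightarrow C$, $2\leftrightarrow G$, $3\leftrightarrow T$ from the table, noting that the Watson--Crick complement acts as $x\mapsto 3-x$ on each Gray coordinate, and then adding to get $a+\overline{a}=(3)+w(3)$ --- is precisely the verification the authors leave implicit, and it is consistent with how the complement is used elsewhere in the paper (for instance Lemma 6, and the identity $(3+3w)-\overline{a}=a$ invoked in the proof of Theorem 7).
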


\begin{lemma}
If $a\in \{0,1,2,3\}$, then we have $(3+3w)-\overline{wa}=wa.$
\end{lemma}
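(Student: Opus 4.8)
The plan is to derive this identity directly from the preceding lemma rather than by a fresh case analysis. First I would observe that for each $a\in\{0,1,2,3\}$ the product $wa$ is again an element of $R$ (explicitly, $wa$ ranges over the four elements $0,w,2w,3w$). Hence the previous lemma, which asserts $b+\overline{b}=3+3w$ for every $b\in R$, applies verbatim with $b=wa$, giving $wa+\overline{wa}=3+3w$.

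Rearranging this single equation is all that remains: subtracting $\overline{wa}$ from both sides and then subtracting $wa$ yields $(3+3w)-\overline{wa}=wa$, which is exactly the claim. In this reading the statement is just the previous lemma specialized to the pure-$w$ elements and rewritten, so there is no genuine obstacle; the only point to confirm is that $\overline{wa}$ here denotes the complement of the ring element $wa$ (and not some separate operation), so that the earlier lemma indeed governs it.

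As an independent cross-check I would also verify the four cases by hand using the correspondence table: $wa\in\{0,w,2w,3w\}$ corresponds to the DNA pairs $AA,AC,AG,AT$, whose Watson--Crick complements $TT,TG,TC,TA$ read back as $3+3w,\ 3+2w,\ 3+w,\ 3$; subtracting each from $3+3w$ returns $0,w,2w,3w$ respectively, confirming the identity term by term. The restriction to $a\in\{0,1,2,3\}$ is therefore not essential to the arithmetic — the identity $(3+3w)-\overline{x}=x$ in fact holds for every $x\in R$, being equivalent to the preceding lemma — but this special pure-$w$ form is presumably the shape in which the result will be invoked in the subsequent construction of reversible-complement cyclic codes.
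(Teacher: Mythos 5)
Your derivation is correct and is exactly the argument the paper intends: the lemma is stated there without proof, immediately after the lemma $a+\overline{a}=3+3w$ (valid for all $a\in R$), of which it is the instant specialization to $b=wa\in\{0,w,2w,3w\}$ followed by rearrangement. Your observation that the restriction $a\in\{0,1,2,3\}$ is inessential to the arithmetic (the identity $(3+3w)-\overline{x}=x$ holds for every $x\in R$) is also accurate; the pure-$w$ form is simply the shape needed later when handling the generator $wa(x)$ in the reversible-complement theorem.
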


\begin{theorem}
Let $C=\left\langle g(x),wa(x)\right\rangle $ be a cyclic code of odd length 
$n$ over $R.$ If $f(x)^{rc}\in C$ for any $f(x)\in C$, then $%
(1+w)(1+x+x^{2}+...+x^{n-1})\in C$ and there are two constants $e,d\in
Z_{4}^{\ast }$ such that $g^{\ast }(x)=eg(x)$ and $a^{\ast }(x)=da(x).$
\end{theorem}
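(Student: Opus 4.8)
The plan is to convert the reversible-complement condition into a statement about reciprocal polynomials, and then extract the two conclusions from the structure $C=\langle g(x),wa(x)\rangle$ together with the factorization of $x^{n}-1$ over $R$. Writing $\beta=3+3w$, the complementation identity $\overline{c}=\beta-c$ valid for every $c\in R$ lets me reindex coordinates: for $f(x)=\sum_{i=0}^{n-1}x_{i}x^{i}$ of degree $r$ one gets
\[
f^{rc}(x)=\beta\,(1+x+\cdots+x^{n-1})-x^{\,n-1-r}f^{*}(x),
\]
where $f^{*}$ is the reciprocal of the earlier definition (the complementation lemma for multiples of $w$ is exactly what guarantees that when $f=wa(x)$ the reciprocal term $x^{\,n-1-r}wa^{*}(x)$ again has all coefficients in $wZ_{4}$). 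Applying this to $f=0\in C$ gives $0^{rc}=\beta(1+x+\cdots+x^{n-1})\in C$; since $C$ is an ideal I may multiply by the unit $3\in Z_{4}^{*}$, and $3\beta=9+9w=1+w$ in $R$ yields $(1+w)(1+x+\cdots+x^{n-1})\in C$, which is the first assertion.

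Next I apply the identity to the two generators. As $g(x)\in C$ and $wa(x)\in C$ and $C$ is reversible complement, $g^{rc}$ and $(wa)^{rc}$ lie in $C$; subtracting the all-ones term just produced and cancelling the unit $x^{\,n-1-r}$ (note $x$ is invertible in $R[x]/\langle x^{n}-1\rangle$) shows $g^{*}(x)\in C$ and $w\,a^{*}(x)\in C$. To prepare the upgrade of these memberships to exact identities I use that for odd $n$ the polynomial $x^{n}-1$ is separable modulo $w$, so it factors over $R$ into pairwise coprime monic basic irreducibles; because $R$ and $Z_{4}$ share the residue field $F_{2}$ and the Hensel lift is unique, these factors already lie in $Z_{4}[x]$. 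Hence $g,a$, being monic divisors of $x^{n}-1$, have constant terms in $Z_{4}^{*}=\{1,3\}$, and $\deg g^{*}=\deg g$, $\deg a^{*}=\deg a$.

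The upgrade itself I carry out through the CRT isomorphism $R[x]/\langle x^{n}-1\rangle\cong\prod_{i}R[x]/\langle f_{i}\rangle$, under which $C$ is a product of local ideals; the hypothesis $a\mid g\mid x^{n}-1$ forces each local ideal to be $0$, $(w)$, or the whole factor. Set $S=\{i:f_{i}\mid g\}$ and $T=\{i:f_{i}\mid a\}\subseteq S$, and let $\sigma$ be the permutation of factors induced by $f_{i}\mapsto f_{i}^{*}$; then $g^{*}$ corresponds to the index set $\sigma(S)$ and $a^{*}$ to $\sigma(T)$. Checking membership factor by factor, $g^{*}\in C$ fails at any $j\in S\setminus\sigma(S)$ (the local ideal there is $0$ or $(w)$, but $g^{*}\bmod f_{j}$ is a unit), so $S\subseteq\sigma(S)$; likewise $wa^{*}\in C$ forces $T\subseteq\sigma(T)$ at the zero-ideal factors. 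Since $\sigma$ is a bijection these give $\sigma(S)=S$ and $\sigma(T)=T$, i.e. $g^{*}=eg$ and $a^{*}=da$; comparing leading and constant coefficients identifies $e=g(0)$ and $d=a(0)$, which lie in $Z_{4}^{*}$ by the previous paragraph.

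I expect the main obstacle to be exactly this last upgrade: the bare memberships $g^{*},wa^{*}\in C$ are strictly weaker than the divisibilities $g\mid g^{*}$, $a\mid a^{*}$ one wants, and closing the gap needs the chain-ring/CRT structure (equivalently, recognizing $g$ as the minimal-degree unit-leading generator and $a$ as the torsion generator) rather than a one-line degree count. A secondary subtlety is making sure the units $e,d$ are pinned to $Z_{4}^{*}$ and not merely to $R^{*}$: other units such as $1+2w$ also satisfy $u^{2}=1$, so the squaring relation $e^{2}=1$ alone is insufficient, and it is precisely the $Z_{4}[x]$-rationality of the factors of $x^{n}-1$ that excludes them.
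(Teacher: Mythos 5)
Your proposal is correct, and its first half is essentially the paper's own argument: both of you apply the reverse-complement hypothesis to the zero codeword to get $(3+3w)\frac{x^{n}-1}{x-1}\in C$ (hence, after multiplying by the unit $3$, the element $(1+w)(1+x+\cdots+x^{n-1})$), and both of you subtract $g^{rc}$ resp.\ $(wa)^{rc}$ from this constant word and use the identity $c+\overline{c}=3+3w$ (the paper's Lemmas 5 and 6) to land on $g^{\ast}(x)\in C$ and $wa^{\ast}(x)\in C$. The divergence is in the upgrade from these memberships to $g^{\ast}=eg$ and $a^{\ast}=da$. The paper does it in two lines: it writes $g^{\ast}(x)=g(x)u(x)+wa(x)v(x)$ with $u,v\in Z_{4}[x]$, asserts $v(x)=0$ because the $g_{i}$ lie in $Z_{4}$, and gets $u(x)\in Z_{4}^{\ast}$ from $\deg g^{\ast}=\deg g$ (and symmetrically for $a$). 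You instead exploit the chain-ring structure of $R$ (maximal ideal $(w)$, residue field $F_{2}$): unique Hensel factorization of $x^{n}-1$ into pairwise coprime basic irreducibles that already lie in $Z_{4}[x]$, the CRT decomposition of $C$ into local ideals $0$, $(w)$, $(1)$, and the permutation $\sigma$ induced by reciprocation, concluding $\sigma(S)=S$ and $\sigma(T)=T$ from finiteness. Your route is heavier but tighter at exactly the point you flagged: in the paper's argument, an ideal element is really $g u+wav$ with $u,v\in R[x]$, and since $w^{2}=2$ the term $wa(x)v(x)$ can contribute $Z_{4}$-coefficients (take $v=wv_{2}$, giving $2a(x)v_{2}(x)$), so the claim $v(x)=0$ is not justified as stated, and the degree comparison is made modulo $x^{n}-1$ without comment; your factor-by-factor membership check avoids both issues and also genuinely explains why $e,d$ can be taken in $Z_{4}^{\ast}$ rather than merely in $R^{\ast}$. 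What the paper's approach buys is brevity and no machinery beyond its two lemmas; what yours buys is a complete proof of the divisibilities $g\mid g^{\ast}$ and $a\mid a^{\ast}$ that the short degree count only gestures at.
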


\begin{proof}
Suppose that $C=\left\langle g(x),wa(x)\right\rangle ,$ where $%
a(x)|g(x)|x^{n}-1\in Z_{4}[x].$ Since $(0,0,...,0)\in C$, then its
reversible complement is also in $C$. 
\begin{eqnarray*}
(0,0,...,0)^{rc} &=&(3+3w,3+3w,...,3+3w) \\
&=&3(1+w)(1,1,...,1)\in C
\end{eqnarray*}%
This vector corresponds of the polynomial%
\begin{equation*}
(3+3w)+(3+3w)x+...+(3+3w)x^{n-1}=(3+3w)\frac{x^{n}-1}{x-1}\in C
\end{equation*}%
Since $3\in Z_{4}^{\ast }$, then $(1+w)(1+x+...+x^{n-1})\in C.$

Let $g(x)=g_{0}+g_{1}x+...+g_{r-1}x^{r-1}+g_{r}x^{r}$. Note that $%
g(x)^{rc}=(3+3w)+(3+3w)x+...+(3+3w)x^{n-r-2}+\overline{g}_{r}x^{n-r-1}+...+%
\overline{g}_{1}x^{n-2}+\overline{g}_{0}x^{n-1}\in C.$

Since $C$ is a linear code, then%
\begin{equation*}
3(1+w)(1+x+x^{2}+...+x^{n-1})-g(x)^{rc}\in C
\end{equation*}%
which implies that $((3+3w)-\overline{g}_{r})x^{n-r-1}+((3+3w)-\overline{g}%
_{r-1})x^{n-r-2}+...+((3+3w)-\overline{g}_{0})x^{n-1}\in C$. By using $%
(3+3w)-\overline{a}=a$, this implies that%
\begin{equation*}
x^{n-r-1}(g_{r}+g_{r-1}x+...+g_{0}x^{r})=x^{n-r-1}g^{\ast }(x)\in C
\end{equation*}%
Since $g^{\ast }(x)\in C,$ this implies that 
\begin{equation*}
g^{\ast }(x)=g(x)u(x)+wa(x)v(x)
\end{equation*}%
where $u(x),v(x)\in Z_{4}[x].$ Since $g_{i}\in Z_{4}$ for $i=0,1,...,r$, \
we have that $v(x)=0$. As $\deg g^{\ast }(x)=\deg g(x),$ we have $u(x)\in
Z_{4}^{\ast }$. Therefore there is a constant $e\in Z_{4}^{\ast }$ such that 
$g^{\ast }(x)=eg(x)$. So, $g(x)$ is a self reciprocal polynomial.

Let $a(x)=a_{0}+a_{1}x+...+a_{t}x^{t}$. Suppose that $%
wa(x)=wa_{0}+wa_{1}x+...+wa_{t}x^{t}$. Then 
\begin{equation*}
(wa(x))^{rc}=(3+3w)+(3+3w)x+...+\overline{wa_{t}}x^{n-t-1}+...+\overline{%
wa_{1}}x^{n-2}+\overline{wa_{0}}x^{n-1}\in C
\end{equation*}%
As $(3+3w)\frac{x^{n}-1}{x-1}\in C$ and $C$ is a linear code, then%
\begin{equation*}
-(wa(x))^{rc}+(3+3w)\frac{x^{n}-1}{x-1}\in C
\end{equation*}%
Hence, $x^{n-t-1}[(-(\overline{wa_{t}})+(3+3w))+(-(\overline{wa_{t-1}}%
)+(3+3w))x+...+(-(\overline{wa_{0}})+(3+3w))x^{t}]$. By Lemma 6, we get 
\begin{equation*}
x^{n-t-1}(wa_{t}+wa_{t-1}x+...+wa_{0}x^{t})
\end{equation*}%
$x^{n-t-1}wa^{\ast }(x)\in C$. Since $wa^{\ast }(x)\in C$, we have 
\begin{equation*}
wa^{\ast }(x)=g(x)h(x)+wa(x)s(x)
\end{equation*}

Since $w$ doesn't appear in $g(x)$, it follows that $h(x)=0$ and $a^{\ast
}(x)=a(x)s(x)$. As $\deg a^{\ast }(x)=\deg a(x)$, then $s(x)\in Z_{4}^{\ast
} $. So, $a(x)$ is a self reciprocal polynomial.
\end{proof}

\begin{theorem}
Let $C=\left\langle g(x),wa(x)\right\rangle $ be a cyclic code of odd length 
$n$ over $R.$ If $(1+w)(1+x+x^{2}+...+x^{n-1})\in C$ and $g(x),a(x)$ are
self reciprocal polynomials, then $c(x)^{rc}\in C$ for any $c(x)\in C.$
\end{theorem}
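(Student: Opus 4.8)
The plan is to reduce the reversible-complement property to the ordinary reversible property and then to verify the latter on the two generators of $C$. First I would record the polynomial form of the reverse-complement operation. Writing $c(x)=\sum_{i=0}^{n-1}c_i x^i$ (padded with zeros so that $c$ is a length-$n$ vector), the vector $c^{rc}$ equals $(\overline{c}_{n-1},\dots,\overline{c}_0)$, so Lemma 5 (giving $\overline{c}_i=(3+3w)-c_i$) yields
\begin{equation*}
c(x)^{rc}=(3+3w)(1+x+\cdots+x^{n-1})-c(x)^{r},
\end{equation*}
where $c(x)^{r}=\sum_{i=0}^{n-1}c_{n-1-i}x^{i}$ is the reverse. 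By hypothesis $(1+w)(1+x+\cdots+x^{n-1})\in C$, and multiplying by the ring element $3$ gives $(3+3w)(1+x+\cdots+x^{n-1})\in C$. Since $C$ is linear, it then suffices to prove that $C$ is reversible, i.e. that $c(x)^{r}\in C$ whenever $c(x)\in C$.

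Next I would exploit that the reverse map $c\mapsto c^{r}$ is $R$-linear and that, in $R[x]/\langle x^{n}-1\rangle$, it interacts with the cyclic shift by $(x\,c(x))^{r}=x^{n-1}c(x)^{r}$; iterating gives $(x^{i}c(x))^{r}=x^{-i}c(x)^{r}$, again a power of $x$ times $c(x)^{r}$. Because every element of $C=\langle g(x),wa(x)\rangle$ is an $R$-linear combination of the shifts $x^{i}g(x)$ and $x^{i}wa(x)$, and the reverse of any such shift is a power of $x$ times $g(x)^{r}$ or $(wa(x))^{r}$, the reversibility of $C$ follows once I establish the two facts $g(x)^{r}\in C$ and $(wa(x))^{r}\in C$. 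This reduction is the step I expect to be the main obstacle: the reverse (equivalently the reciprocal) is not additive, as Lemma 4(ii) shows, so one cannot test reversibility on the generators directly, and the reduction must be routed through the $R$-linearity of the reverse map together with the shift identity above.

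Finally I would verify the two generator conditions using self-reciprocity. If $\deg g=r$, then $g(x)^{r}=x^{n-1-r}g^{\ast}(x)$, and since $g^{\ast}(x)=eg(x)$ with $e\in Z_{4}^{\ast}$ this equals $ex^{n-1-r}g(x)\in C$ (here $n-1-r\ge 0$ because the quotient representative has $\deg g<n$, and $C$ is an ideal closed under multiplication by $x$ and by the unit $e$). Likewise, if $\deg a=t$ then $(wa(x))^{r}=x^{n-1-t}w\,a^{\ast}(x)=d\,x^{n-1-t}\big(wa(x)\big)\in C$, using $a^{\ast}(x)=da(x)$ with $d\in Z_{4}^{\ast}$ and the fact that the scalar $w$ factors through the reciprocal of $a$. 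Combining this reversibility of $C$ with the displayed identity and with $(3+3w)(1+x+\cdots+x^{n-1})\in C$ shows that $c(x)^{rc}\in C$ for every $c(x)\in C$, so $C$ is reversible complement, completing the proof.
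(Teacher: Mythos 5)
Your proof is correct, but it takes a genuinely different route from the paper's. The paper works at the level of polynomial reciprocals: it decomposes an arbitrary codeword as $c(x)=g(x)m(x)+wa(x)n(x)$, applies Lemma 4 (reciprocals of products and sums) together with $g^{\ast}=eg$, $a^{\ast}=da$ to conclude $c^{\ast}(x)\in C$, and then applies the linear-combination trick with $(1+w)\frac{x^{n}-1}{x-1}$ to the shifted codeword $x^{n-t-1}c(x)$, recognizing the result as $(c^{\ast}(x))^{rc}$; combining the two steps yields $c^{rc}\in C$. You instead reduce the reverse-complement property to plain reversibility via the identity $c^{rc}=(3+3w)(1+x+\cdots+x^{n-1})-c^{r}$, and then prove reversibility by checking it only on the two generators --- a reduction you justify by observing that the reverse of \emph{padded} length-$n$ vectors is $R$-linear and satisfies $(x^{i}c)^{r}=x^{-i}c^{r}$, so that reversibility propagates from $g$ and $wa$ to every $R$-combination of their shifts. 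This is exactly the right way around the obstacle you name: the reciprocal $f\mapsto f^{\ast}$ is not additive (Lemma 4(ii)), so one cannot test reversibility on generators naively, whereas the padded reverse is additive. Your route buys two things: it isolates reversibility as a clean intermediate property (mirroring the paper's own Theorem 16, which over $S$ characterizes reverse-complement codes as reversible codes containing $(\overline{0},\ldots,\overline{0})$), and it sidesteps the delicate use of Lemma 4(ii) in the paper's computation of $c^{\ast}$, where the shift factor $x^{s}$ and the implicit assumption that no leading coefficients cancel require care. The paper's route, in exchange, never needs the generator-reduction argument at all, since the decomposition $c=gm+wan$ handles arbitrary codewords directly. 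One cosmetic remark: in your final step the hypothesis that $e,d$ are units is not needed --- membership $ex^{n-1-r}g(x)\in C$ follows from $C$ being an ideal for any constants $e,d$, so your argument in fact works verbatim under the paper's weaker Definition 3 of self-reciprocal.
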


\begin{proof}
Since $C=\left\langle g(x),wa(x)\right\rangle ,$ for any $c(x)\in C$, there
exist $m(x)$ and $n(x)$ in $R[x]$ such that $c(x)=g(x)m(x)+wa(x)n(x).$ By
using Lemma 4, we have 
\begin{eqnarray*}
c^{\ast }(x) &=&(g(x)m(x)+wa(x)n(x)) \\
&=&(g(x)m(x))^{\ast }+x^{s}(wa(x)n(x)) \\
&=&g^{\ast }(x)m^{\ast }(x)+wa^{\ast }(x)(x^{s}n^{\ast }(x))
\end{eqnarray*}%
Since $g^{\ast }(x)=eg(x),a^{\ast }(x)=da(x)$, we have $c^{\ast
}(x)=eg(x)m^{\ast }(x)+dwa(x)(x^{s}n^{\ast }(x))\in C$. So, $c^{\ast }(x)\in
C.$

Let $c(x)=c_{0}+c_{1}x+...+c_{t}x^{t}\in C$. Since $C$ is a cyclic code, we
get 
\begin{equation*}
x^{n-t-1}c(x)=c_{0}x^{n-t-1}+c_{1}x^{n-t}+...+c_{t}x^{n-1}\in C
\end{equation*}%
Since $(1+w)+(1+w)x+...+(1+w)x^{n-1}\in C$ and $C$ is a linear code%
\begin{eqnarray*}
-(1+w)\frac{x^{n}-1}{x-1}-x^{n-t-1}c(x)
&=&-(1+w)-(1+w)x+...+(-c_{0}-(1+w))x^{n-t-1} \\
+...+(-c_{t}-(1+w))x^{n-1} &\in &C
\end{eqnarray*}%
By using $\overline{a}+(1+w)=-a$, this implies that 
\begin{equation*}
-(1+w)-...+\overline{c}_{0}x^{n-t-1}+...+\overline{c}_{t}x^{n-1}\in C
\end{equation*}%
This shows that $(c^{\ast }(x))^{rc}\in C$. 
\begin{equation*}
((c^{\ast }(x))^{rc})^{\ast }=\overline{c}_{t}+\overline{c}%
_{t-1}x+...+(3+3w)x^{n-1}
\end{equation*}%
This corresponds this vector $(\overline{c}_{t},\overline{c}_{t-1},...,%
\overline{c}_{0},...,\overline{0})$. Since $(c^{\ast }(x)^{rc})^{\ast
}=(x^{n-t-1}c(x))^{rc}$, so $c(x)^{rc}\in C.$
\end{proof}

\section{The reversible and reversible complement codes over $S$}

Throughout this paper, $S$ denotes the commutative ring $%
Z_{4}+wZ_{4}+vZ_{4}+wvZ_{4}=\{b_{1}+wb_{2}+vb_{3}+wvb_{4}:b_{j}\in
Z_{4},1\leq j\leq 4\}$ with $w^{2}=2,v^{2}=v,wv=vw,$ with characteristic 4. $%
S$ can also be thought of as the quotient ring $%
Z_{4}[w,v]/<w^{2}-2,v^{2}-v,wv-vw>.$

Let 
\begin{eqnarray*}
S &=&Z_{4}+wZ_{4}+vZ_{4}+wvZ_{4},\text{ where }w^{2}=2,v^{2}=v,wv=vw \\
&=&(Z_{4}+wZ_{4})+v(Z_{4}+wZ_{4}),\text{ where }w^{2}=2,v^{2}=v,wv=vw \\
&=&R+vR,\text{ where }v^{2}=v
\end{eqnarray*}

We define the Gray map $\phi _{1}$ from $S$ to $R$ as follows%
\begin{eqnarray*}
\phi _{1} &:&S\longrightarrow R^{2} \\
a+vb &\longmapsto &(a,b)
\end{eqnarray*}%
where $a,b\in R.$ This Gray map is extended compenentwise to 
\begin{eqnarray*}
\phi _{1} &:&S^{n}\longrightarrow R^{2n} \\
x &=&(x_{1},...,x_{n})\longmapsto (a_{1},...,a_{n},b_{1},...,b_{n})
\end{eqnarray*}%
where $x_{i}=a_{i}+vb_{i},a_{i},b_{i}\in R$ for $i=1,2,...,n.$

In this section, we study cyclic codes of odd length $n$ over $S$ satisfy
reverse and reverse complement constraint. Since the ring $S$ is of the
cardinality $4^{4}$, then we define the map $\phi _{1}$ which gives a one to
one correspondence between the element of $S$ and the 256 codons over the
alphabet $\{A,T,G,C\}^{4}$ by using the Gray map. For example;%
\begin{eqnarray*}
0 &=&0+v0\longmapsto \phi _{1}(0)=(0,0)\longrightarrow AAAA \\
2wv &=&0+v(2w)\longmapsto \phi _{1}(2wv)=(0,2w)\longrightarrow AAAG \\
1+3v+3wv &=&1+v(3+3w)\longmapsto \phi
_{1}(1+v(3+3w))=(1,3+3w)\longrightarrow CATT
\end{eqnarray*}

\begin{definition}
Let $A_{1},A_{2}$ be linear codes.%
\begin{equation*}
A_{1}\otimes A_{2}=\{(a_{1},a_{2}):a_{1}\in A_{1},a_{2}\in A_{2}\}
\end{equation*}%
and%
\begin{equation*}
A_{1}\oplus A_{2}=\{a_{1}+a_{2}:a_{1}\in A_{1},a_{2}\in A_{2}\}
\end{equation*}

Let $C$ be a linear code of length $n$ over $S$. Define 
\begin{eqnarray*}
C_{1} &=&\{a:\exists \text{ }b\in R^{n},a+vb\in C\} \\
C_{2} &=&\{b:\exists \text{ }a\in R^{n},a+vb\in C\}
\end{eqnarray*}%
where $C_{1}$ and $C_{2}$ are \ linear codes over $R$ of length $n$.
\end{definition}

\begin{theorem}
Let $C$ be a linear code of length $n$ over $S$. Then $\phi
_{1}(C)=C_{1}\otimes C_{2}$ and $\left\vert C\right\vert =\left\vert
C_{1}\right\vert \left\vert C_{2}\right\vert .$
\end{theorem}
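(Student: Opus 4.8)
The plan is to reduce everything to the idempotent structure of $S=R+vR$ and then verify the two inclusions that make up the equality $\phi_1(C)=C_1\otimes C_2$, from which the cardinality statement drops out immediately. First I would record that $v$ and $1-v$ are orthogonal idempotents of $S$: since $v^2=v$ one has $(1-v)^2=1-v$, $v(1-v)=0$, and $v+(1-v)=1$. Because $C$ is an $S$-submodule of $S^n$ it is closed under multiplication by these idempotents, and this closure is the lever that lets me move between a codeword $a+vb\in C$ and its two $R$-components.

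Next I would dispatch the easy inclusion $\phi_1(C)\subseteq C_1\otimes C_2$. If $(a,b)\in\phi_1(C)$, then by the definition of the Gray map $a+vb\in C$ with $a,b\in R^n$; the defining properties of $C_1$ and $C_2$ then give $a\in C_1$ and $b\in C_2$, whence $(a,b)\in C_1\otimes C_2$. This direction is purely formal and uses no module structure at all.

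The heart of the argument is the reverse inclusion $C_1\otimes C_2\subseteq\phi_1(C)$, and this is the step I expect to be the main obstacle. Given $a\in C_1$ and $b\in C_2$, the definitions supply only \emph{some} $b'\in R^n$ with $a+vb'\in C$ and \emph{some} $a'\in R^n$ with $a'+vb\in C$; there is no reason a priori that the specific pair $(a,b)$ assembles into a codeword. The idea is to separate the two $R$-coordinates using the idempotents: multiplying $a+vb'\in C$ by $1-v$ kills the $v$-part and yields $(1-v)a\in C$, while multiplying $a'+vb\in C$ by $v$ yields $v(a'+b)\in C$. I would then attempt to recombine these (together with further idempotent multiples) so as to cancel the spurious $a'$-contribution and isolate exactly $a+vb$, exploiting that $(1-v)$ captures the $v\mapsto 0$ part and $v$ the $v\mapsto 1$ part. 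Making this cancellation go through cleanly is precisely where the orthogonality $v(1-v)=0$ and the internal decomposition $S=(1-v)R\oplus vR$ must be invoked, and it is the delicate point of the proof; one should check carefully that $C_1$ and $C_2$ really are the $R$-images of $C$ under the two evaluation projections $v\mapsto 0$ and (the relevant) $v\mapsto 1$, so that every admissible pair genuinely lifts to a codeword.

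Finally, once $\phi_1(C)=C_1\otimes C_2$ is in hand, the cardinality claim is immediate. Since $\phi_1$ is a bijection (indeed an $R$-module isomorphism $S^n\to R^{2n}$), we get $|C|=|\phi_1(C)|=|C_1\otimes C_2|=|C_1|\,|C_2|$, the last equality holding because $C_1\otimes C_2$ is by definition the full Cartesian product $\{(a_1,a_2):a_1\in C_1,\ a_2\in C_2\}$.
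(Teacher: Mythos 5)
Your opening inclusion $\phi_1(C)\subseteq C_1\otimes C_2$ and your closing cardinality deduction are fine, but the step you yourself flag as the delicate point --- recombining $(1-v)a$ and $v(a'+b)$ so as to cancel the spurious $a'$ and produce $a+vb$ --- is a genuine gap, and with the paper's literal definitions it cannot be closed. From the witnesses $a+vb'\in C$ and $a'+vb\in C$, idempotent multiplication yields exactly $(1-v)a,\ (1-v)a',\ v(a+b'),\ v(a'+b)\in C$; since $a+vb=(1-v)a+v(a+b)$, what is still needed is $v(a+b)\in C$, i.e.\ $v(a-a')\in C$, and nothing in the hypotheses supplies it. In fact the statement as literally written is false. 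Take $n=1$ and $C=S(1+v)=\langle 1+v\rangle$. Since $(x+vy)(1+v)=x+v(x+2y)$, we have $C=\{x+v(x+2y):x,y\in R\}$, so $|C|=|R|\cdot|2R|=16\cdot 4=64$ (here $2R=\{0,2,2w,2+2w\}$); but every element of $R$ occurs as a first coordinate and every element of $R$ occurs as a second coordinate, so $C_1=C_2=R$ and $|C_1||C_2|=256\neq 64$. Concretely, $0\in C_1$ (witness $0\in C$) and $1\in C_2$ (witness $1+v\in C$), yet $(0,1)\notin\phi_1(C)$ because $v\notin C$.

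The failure sits exactly at the caveat you raise: $a+vb\mapsto a$ is indeed the evaluation $v\mapsto 0$ (a ring homomorphism), but $a+vb\mapsto b$ is \emph{not} the evaluation at $v=1$; that evaluation is $a+vb\mapsto a+b$, and the $v$-coefficient map is not multiplicative, so the $C_2$ of the paper's definition is not an evaluation image and no idempotent bookkeeping can reach it. If instead one writes every codeword uniquely as $c=vc_1+(1-v)c_2$ (so $c_1=a+b$, $c_2=a$), defines $C_1,C_2$ as the sets of these $c_1$, resp.\ $c_2$, and reads $\phi_1$ in those coordinates, then your strategy closes in one line: multiplying the witness for $c_1$ by $v$ gives $vc_1\in C$, multiplying the witness for $c_2$ by $1-v$ gives $(1-v)c_2\in C$, and adding gives $vc_1+(1-v)c_2\in C$. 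That corrected statement is the one the rest of the paper actually relies on (the corollary $C=vC_1\oplus(1-v)C_2$ and the decomposition theorems that follow it). For what it is worth, the paper states this theorem with no proof at all, so your difficulty is not a divergence from the authors' argument; it is an inconsistency inherited from the paper's own definitions of $C_2$ and $\phi_1$, and your instinct to verify the evaluation-image property is precisely the check that exposes it.
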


\begin{corollary}
If $\phi _{1}(C)=C_{1}\otimes C_{2}$, then $C=vC_{1}\oplus (1-v)C_{2}.$
\end{corollary}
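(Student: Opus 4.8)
The plan is to use that $v$ and $1-v$ are orthogonal idempotents of $S$ (from $v^2=v$ one gets $v(1-v)=0$ and $v+(1-v)=1$), so that every $S$-module carries a Peirce decomposition. Concretely, for the linear code $C$ one has $C=vC\oplus(1-v)C$: each $c\in C$ is written $c=vc+(1-v)c$ with $vc\in vC$ and $(1-v)c\in(1-v)C$, the two summands lie in $C$ because $C$ is closed under multiplication by the scalars $v,1-v\in S$, and the sum is direct since $x=vc_1=(1-v)c_2$ forces $x=vx=v(1-v)c_2=0$. Thus the whole statement reduces to identifying the pieces $vC$ and $(1-v)C$ with the two summands in the claim.

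Next I would record the elementary identities produced by writing a codeword as $c=a+vb$ with $a,b\in R^n$: because $v^2=v$, one has $(1-v)c=(1-v)a$ and $vc=v(a+b)$. Theorem 11 is the engine here, since $\phi_1(C)=C_1\otimes C_2$ means precisely that $a+vb\in C$ for \emph{every} pair $a\in C_1$, $b\in C_2$ independently (the product, or rectangle, property). The inclusion $\supseteq$ is then the routine half: take $c_1\in C_1$ and $c_2\in C_2$, lift them to codewords of $C$ via the rectangle property, apply the idempotents $v$ and $1-v$, and use linearity of $C$ to keep the results inside $C$, recovering every element of the claimed direct sum.

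I expect the real obstacle to be the bookkeeping in the reverse inclusion, namely deciding which of $C_1,C_2$ is paired with $v$ and which with $1-v$ once the $v$-component $v(a+b)$ mixes both $a$ and $b$. The key auxiliary fact is the containment $C_1\subseteq C_2$: multiplying any codeword $a+vb$ by $1-v$ gives $(1-v)a=a+v(-a)\in C$, whose $v$-part shows $-a\in C_2$, hence $a\in C_2$. Only with $C_1\subseteq C_2$ in hand does the mixed term collapse, since then the set $\{v(a+b):a\in C_1,b\in C_2\}=v(C_1+C_2)$ simplifies, letting $vC$ and $(1-v)C$ be pinned to the asserted codes. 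I would therefore prove $C_1\subseteq C_2$ first, then combine it with Theorem 11 to identify both components, and finally assemble the two inclusions through the Peirce decomposition $C=vC\oplus(1-v)C$ to finish.
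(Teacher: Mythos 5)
Your machinery is correct as far as it goes: the Peirce decomposition $C=vC\oplus(1-v)C$, the identities $(1-v)(a+vb)=(1-v)a$ and $v(a+vb)=v(a+b)$, and the containment $C_1\subseteq C_2$ are all valid (the paper itself states this corollary with no proof, so these are the right ingredients to assemble). The gap is in your final step, where you assert that $vC$ and $(1-v)C$ get ``pinned to the asserted codes.'' They do not: with the paper's printed definitions of $C_1$ (the set of $a$-parts) and $C_2$ (the set of $b$-parts of codewords $a+vb$), your own identities pin the pieces the \emph{other way around}. Indeed $(1-v)C=(1-v)C_1$ directly from the definition of $C_1$, and the rectangle hypothesis together with $C_1\subseteq C_2$ gives $vC=v(C_1+C_2)=vC_2$; so your argument proves $C=vC_2\oplus(1-v)C_1$, i.e.\ the statement with $C_1$ and $C_2$ interchanged, not $C=vC_1\oplus(1-v)C_2$. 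The same swap sinks your sketch of the inclusion $\supseteq$: to place $(1-v)c_2$ in $C$ for $c_2\in C_2$ you would need a codeword of $C$ whose $a$-part is $c_2$, i.e.\ the containment $C_2\subseteq C_1$, which you never prove and which does not follow (your lift $vc_2\in C$ is killed by $1-v$, since $(1-v)vc_2=0$).

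This is not repairable by more careful bookkeeping, because under the paper's literal definitions the printed statement is false. Take $C=vR^n$, which is an $S$-submodule of $S^n$: then $C_1=\{0\}$ and $C_2=R^n$, the hypothesis holds since $\phi_1(C)=\{0\}\otimes R^n=C_1\otimes C_2$, yet $vC_1\oplus(1-v)C_2=(1-v)R^n\neq vR^n=C$, while the swapped identity $vC_2\oplus(1-v)C_1=vR^n=C$ is correct. What the paper evidently intends (and what it actually uses later, e.g.\ in the proof of Theorem 13, where codewords are written $m_i=va_i^1\oplus(1-v)a_i^2$) are the idempotent-based definitions $C_1=\{x:\exists\, y\in R^n,\ vx+(1-v)y\in C\}$ and $C_2=\{y:\exists\, x\in R^n,\ vx+(1-v)y\in C\}$; with those, your Peirce decomposition proves the corollary in two lines. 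So to have a correct proof you must either adopt those definitions, or keep the paper's definitions and prove (and state) the swapped identity $C=vC_2\oplus(1-v)C_1$; as written, your argument does not establish the statement in question.
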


\begin{theorem}
Let $C=vC_{1}\oplus (1-v)C_{2}$ be a linear code of odd length $n$ over $S$.
Then $C$ is a cyclic code over $S$ if and only if $C_{1},C_{2}$ are cyclic
codes over $R$.
\end{theorem}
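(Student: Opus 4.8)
The plan is to prove both directions from a single structural fact: the cyclic shift $\sigma$ commutes with multiplication by any fixed scalar of $S$. In particular $\sigma(vx)=v\sigma(x)$ and $\sigma((1-v)x)=(1-v)\sigma(x)$ for every $x\in S^{n}$, because $\sigma$ merely permutes coordinates while scalar multiplication acts coordinatewise. I would first record the elementary facts that $v$ and $1-v$ are orthogonal idempotents with $v+(1-v)=1$ and $v(1-v)=0$, and that for $a\in R^{n}$ one has $va=0\Rightarrow a=0$ (and likewise $(1-v)a=0\Rightarrow a=0$): in the decomposition $S=R\oplus vR$ the element $va$ has $R$-component $0$ and $vR$-component $a$, so it vanishes only when $a=0$.

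For the \emph{if} direction, suppose $C_{1}$ and $C_{2}$ are cyclic over $R$. Take an arbitrary codeword $c\in C=vC_{1}\oplus (1-v)C_{2}$, so $c=va+(1-v)b$ with $a\in C_{1}$ and $b\in C_{2}$. Then $\sigma(c)=v\sigma(a)+(1-v)\sigma(b)$ by the commutation fact, and since $\sigma(a)\in C_{1}$ and $\sigma(b)\in C_{2}$ by cyclicity of $C_{1},C_{2}$, we obtain $\sigma(c)\in vC_{1}\oplus (1-v)C_{2}=C$. Hence $C$ is invariant under $\sigma$, i.e.\ cyclic.

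For the \emph{only if} direction, suppose $C$ is cyclic. Given $a\in C_{1}$, note that $va=va+(1-v)\cdot 0\in C$ (using $0\in C_{2}$), so $\sigma(va)=v\sigma(a)\in C$ by cyclicity. Writing $v\sigma(a)=v\alpha +(1-v)\beta$ with $\alpha \in C_{1},\beta \in C_{2}$ and multiplying through by $v$ (using $v^{2}=v$ and $v(1-v)=0$) yields $v\sigma(a)=v\alpha$, whence $\sigma(a)=\alpha \in C_{1}$ by injectivity of $a\mapsto va$ on $R^{n}$. Thus $C_{1}$ is cyclic; the identical argument with $1-v$ in place of $v$ shows $C_{2}$ is cyclic.

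The routine parts here are the commutation identity and the idempotent bookkeeping. The one place to be careful, and what I would flag as the main obstacle, is the \emph{only if} direction: I must use the orthogonal-idempotent projection (multiplication by $v$) together with injectivity of $a\mapsto va$ to pass from $v\sigma(a)\in C$ to the genuine membership $\sigma(a)\in C_{1}$. Merely knowing $v\sigma(a)\in C$ does not by itself give $\sigma(a)\in C_{1}$ without isolating the $v$-component through the decomposition $C=vC_{1}\oplus (1-v)C_{2}$.
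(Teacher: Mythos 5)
Your proof is correct and takes essentially the same approach as the paper: both directions work through the idempotent decomposition $C=vC_{1}\oplus (1-v)C_{2}$, using that the cyclic shift acts componentwise on $v$-parts and $(1-v)$-parts. The only difference is one of rigor rather than route: where the paper shifts $va^{1}+(1-v)a^{2}$ and simply asserts that the shifted components lie in $C_{1}$ and $C_{2}$, you justify that projection step explicitly (pairing $a\in C_{1}$ with $0\in C_{2}$, multiplying by $v$, and invoking injectivity of $a\mapsto va$ on $R^{n}$), which is precisely the step the paper's ``Hence'' glosses over.
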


\begin{proof}
Let $(a_{0}^{1},a_{1}^{1},...,a_{n-1}^{1})\in
C_{1},(a_{0}^{2},a_{1}^{2},...,a_{n-1}^{2})\in C_{2}$. Assume that $%
m_{i}=va_{i}^{1}\oplus (1-v)a_{i}^{2}$ for $i=0,1,2,...,n-1$. Then $%
(m_{0},m_{1},...,m_{n-1})\in C$. Since $C$ is a cyclic code, it follows that 
$(m_{n-1},m_{0},m_{1},...,m_{n-2})\in C$. Note that $%
(m_{n-1},m_{0},...,m_{n-2})=v(a_{n-1}^{1},a_{0}^{1},...,a_{n-2}^{1})\oplus
(1-v)(a_{n-1}^{2},a_{0}^{2},...,a_{n-2}^{2})$. Hence $%
(a_{n-1}^{1},a_{0}^{1},...,a_{n-2}^{1})\in
C_{1},(a_{n-1}^{2},a_{0}^{2},...,a_{n-2}^{2})\in C_{2}$. Therefore $%
C_{1},C_{2}$ are cyclic codes over $R.$

Conversely, suppose that $C_{1},C_{2}$ are cyclic codes over $R$. Let $%
(m_{0},m_{1},...,m_{n-1})\in C$, where $m_{i}=va_{i}^{1}\oplus
(1-v)a_{i}^{2} $ for $i=0,1,2,...,n-1$. Then $%
(a_{n-1}^{1},a_{0}^{1},...,a_{n-2}^{1})\in
C_{1},(a_{n-1}^{2},a_{0}^{2},...,a_{n-2}^{2})\in C_{2}$. Note that $%
(m_{n-1},m_{0},...,m_{n-2})=v(a_{n-1}^{1},a_{0}^{1},...,a_{n-2}^{1})\oplus
(1-v)(a_{n-1}^{2},a_{0}^{2},...,a_{n-2}^{2})\in C$. So, $C$ is a cyclic code
over $S$.
\end{proof}

\begin{theorem}
Let $C=vC_{1}\oplus (1-v)C_{2}$ be a linear code of odd length $n$ over $S$.
Then $C$ is reversible over $S$ iff $C_{1},C_{2}$ are reversible over $R$.
\end{theorem}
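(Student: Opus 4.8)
The plan is to exploit that $v$ and $1-v$ are orthogonal idempotents of $S$ (since $v^{2}=v$ gives $(1-v)^{2}=1-v$ and $v(1-v)=0$), so that $S$ is free of rank $2$ over $R$ on the basis $\{1,v\}$. In particular, multiplication by $v$ (respectively by $1-v$) is injective on $R^{n}$, and in any decomposition $m=va^{1}+(1-v)a^{2}$ with $a^{1}\in C_{1}$, $a^{2}\in C_{2}$ the two components are uniquely determined. The single observation that drives everything is that the reverse map $x\mapsto x^{r}$ is merely a permutation of the $n$ coordinates, and hence commutes with addition and with componentwise scalar multiplication by $v$ and by $1-v$; that is, $(vx)^{r}=v\,x^{r}$ and $((1-v)x)^{r}=(1-v)\,x^{r}$ for all $x\in R^{n}$.

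For the forward direction I would assume $C$ is reversible and show $C_{1}$ is reversible. Given $a^{1}\in C_{1}$, since $0\in C_{2}$ the word $va^{1}=va^{1}+(1-v)\cdot 0$ lies in $C$, so by reversibility $(va^{1})^{r}=v(a^{1})^{r}\in C$. Writing $v(a^{1})^{r}=vp+(1-v)q$ with $p\in C_{1}$, $q\in C_{2}$ and multiplying through by $v$ annihilates the $(1-v)$ term (because $v(1-v)=0$) and leaves $v(a^{1})^{r}=vp$; injectivity of multiplication by $v$ then gives $(a^{1})^{r}=p\in C_{1}$. The argument for $C_{2}$ is symmetric: from $(1-v)a^{2}\in C$ one obtains $(1-v)(a^{2})^{r}\in C$, and multiplying by $1-v$ isolates $(a^{2})^{r}\in C_{2}$.

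For the converse I would assume $C_{1}$ and $C_{2}$ are reversible, take any $m\in C$, and write $m=va^{1}+(1-v)a^{2}$ with $a^{1}\in C_{1}$, $a^{2}\in C_{2}$. Using that reversal commutes with the idempotents and with addition, $m^{r}=v(a^{1})^{r}+(1-v)(a^{2})^{r}$. Reversibility of $C_{1}$ and $C_{2}$ gives $(a^{1})^{r}\in C_{1}$ and $(a^{2})^{r}\in C_{2}$, whence $m^{r}\in vC_{1}\oplus(1-v)C_{2}=C$. This parallels the bookkeeping already used in the preceding cyclicity theorem, with the cyclic shift $\sigma$ there replaced here by the reversal permutation.

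I do not expect a serious obstacle; the proof is essentially a decomposition argument. The only point that genuinely needs care is the extraction step in the forward direction, namely justifying that $v(a^{1})^{r}\in C$ forces $(a^{1})^{r}\in C_{1}$ rather than merely some word whose $v$-part happens to agree. This is exactly where the freeness of $\{1,v\}$ as an $R$-basis is used, so that multiplication by the idempotent $v$ (respectively $1-v$) is injective and the component $C_{1}$ (respectively $C_{2}$) is recovered unambiguously.
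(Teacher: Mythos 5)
Your proof is correct and takes essentially the same route as the paper's: both directions rest on decomposing codewords as $vb_{1}+(1-v)b_{2}$ with $b_{1}\in C_{1}$, $b_{2}\in C_{2}$, using that coordinate reversal commutes with addition and with multiplication by the idempotents $v$ and $1-v$, and (in the harder direction) recovering $b_{1}^{r}\in C_{1}$, $b_{2}^{r}\in C_{2}$ from $b^{r}\in C$ via uniqueness of that decomposition. The only difference is one of rigor: the paper merely writes $vb_{1}^{r}+(1-v)b_{2}^{r}=vs_{1}+(1-v)s_{2}$ and asserts the conclusion, whereas you justify the extraction explicitly by multiplying by the idempotents and invoking the freeness of $S$ over $R$ on the basis $\{1,v\}$, which makes precise the step the paper leaves implicit.
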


\begin{proof}
Let $C_{1},C_{2}$ be reversible codes. For any $b\in C,b=vb_{1}+(1-v)b_{2},$
where $b_{1}\in C_{1},b_{2}\in C_{2}$. Since $C_{1}$ and $C_{2}$ are
reversible, $b_{1}^{r}\in C_{1},b_{2}^{r}\in C_{2}$. So, $%
b^{r}=vb_{1}^{r}+(1-v)b_{2}^{r}\in C$. Hence $C$ is reversible.

On the other hand, Let $C$ be a reversible code over $S$. So for any $%
b=vb_{1}+(1-v)b_{2}\in C$, where $b_{1}\in C_{1},b_{2}\in C_{2}$, we get $%
b^{r}=vb_{1}^{r}+(1-v)b_{2}^{r}\in C$. Let $%
b^{r}=vb_{1}^{r}+(1-v)b_{2}^{r}=vs_{1}+(1-v)s_{2}$, where $s_{1}\in
C_{1},s_{2}\in C_{2}$. So $C_{1}$ and $C_{2}$ are reversible codes over $R$.
\end{proof}

\begin{lemma}
For any $c\in S$, we have $c+\overline{c}=(3+3w)+v(3+3w).$
\end{lemma}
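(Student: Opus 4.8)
The plan is to reduce the claim to Lemma 5 by exploiting the decomposition $S = R + vR$ together with the fact that the complement map on $S$ is built componentwise from the complement on $R$ through the Gray map $\phi_1$. First I would write an arbitrary element $c \in S$ as $c = a + vb$ with $a, b \in R$, so that $\phi_1(c) = (a,b)$ and the DNA codon of length four attached to $c$ (via the $256$-codon correspondence) is the concatenation of the DNA double pair of $a$ with the DNA double pair of $b$.

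The key step is to identify $\overline{c}$ explicitly. Since the Watson--Crick complement acts nucleotide by nucleotide on the four-letter string, complementing $c$ amounts to complementing the double pair of $a$ and the double pair of $b$ separately. Translating back through $\phi_1$, this says precisely that $\overline{c} = \overline{a} + v\overline{b}$; in other words, the complement on $S$ is the $v$-linear extension of the complement on $R$.

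Granting this, the computation is immediate. We have $c + \overline{c} = (a + \overline{a}) + v(b + \overline{b})$, and applying Lemma 5 to both $a$ and $b$ in $R$ gives $a + \overline{a} = 3 + 3w$ and $b + \overline{b} = 3 + 3w$. Substituting yields $c + \overline{c} = (3+3w) + v(3+3w)$, which is exactly the asserted identity.

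I expect the only point requiring genuine care to be the identification $\overline{c} = \overline{a} + v\overline{b}$ in the second step: this is where the definition of the complement on $S$ must be shown compatible with $\phi_1$ and with the complement on $R$, i.e.\ that the four-letter complement decomposes across the two double pairs produced by $\phi_1$. Once this compatibility is recorded, the remainder is a single application of Lemma 5 and poses no further obstacle.
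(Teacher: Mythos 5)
Your proof is correct. Note that the paper states this lemma with no proof at all, so there is no argument of the authors' to compare against; your reduction through the decomposition $S=R+vR$ and Lemma 5 is the natural one. The compatibility $\overline{a+vb}=\overline{a}+v\,\overline{b}$ that you flag as the only delicate point is in fact immediate from how the paper sets up complementation on $S$: the codon attached to $a+vb$ is, via $\phi_1$ and the table for $R$, the concatenation of the double pairs of $a$ and $b$, and the Watson--Crick complement acts nucleotide by nucleotide, hence on each half separately; equivalently, one can check directly that complementation sends $c=a_0+wa_1+va_2+wva_3$ to $(3-a_0)+w(3-a_1)+v(3-a_2)+wv(3-a_3)$, giving $c+\overline{c}=3(1+w)(1+v)=(3+3w)+v(3+3w)$.
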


\begin{lemma}
For any $a\in S$, $\overline{a}+3\overline{0}=3a.$
\end{lemma}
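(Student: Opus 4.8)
The plan is to reduce everything to Lemma 12, which pins down the complement map on $S$ completely. First I would extract the value of $\overline{0}$: setting $a=0$ in Lemma 12 gives $0+\overline{0}=(3+3w)+v(3+3w)$, so
\begin{equation*}
\overline{0}=(3+3w)+v(3+3w).
\end{equation*}
The crucial observation is that the right-hand side of Lemma 12 is exactly this element $\overline{0}$, so the lemma can be rewritten as the uniform identity $a+\overline{a}=\overline{0}$ valid for every $a\in S$. This immediately yields the additive expression
\begin{equation*}
\overline{a}=\overline{0}-a.
\end{equation*}

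With this in hand the computation is a one-line substitution. I would write
\begin{equation*}
\overline{a}+3\overline{0}=(\overline{0}-a)+3\overline{0}=4\overline{0}-a,
\end{equation*}
and then invoke the fact (stated at the start of Section~4) that $S$ has characteristic $4$. Hence $4\overline{0}=0$, leaving $-a$, and since $-1=3$ in $Z_{4}$ we have $-a=3a$. Combining these gives $\overline{a}+3\overline{0}=3a$, as required.

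I do not expect any genuine obstacle here: the statement is a purely formal consequence of Lemma 12 together with the characteristic-$4$ condition on $S$, so the entire argument is additive bookkeeping. The only point requiring a moment's care is recognizing that the constant $(3+3w)+v(3+3w)$ appearing in Lemma 12 is literally $\overline{0}$, which is what converts the two-term identity into the clean relation $\overline{a}=\overline{0}-a$ and makes the characteristic-$4$ cancellation $4\overline{0}=0$ available. Once that identification is made, no casework on the coefficients of $a$ (as an element of $Z_{4}+wZ_{4}+vZ_{4}+wvZ_{4}$) is needed, and in particular one need not re-derive the componentwise action of the complement from the codon table.
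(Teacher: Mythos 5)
Your proof is correct. The paper states this lemma with no proof at all, so there is no argument of the authors' to compare against; your derivation --- setting $a=0$ in the identity $c+\overline{c}=(3+3w)+v(3+3w)$ (which is Lemma 14 in the paper's numbering, not Lemma 12) to get $\overline{0}=(3+3w)+v(3+3w)$, hence $\overline{a}=\overline{0}-a$, then using the characteristic-$4$ relations $4\overline{0}=0$ and $-a=3a$ --- is the natural formal argument and is exactly the mechanism the paper implicitly relies on when it invokes this lemma in the proof of the theorem that follows it.
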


\begin{theorem}
Let $C=vC_{1}\oplus (1-v)C_{2}$ be a cyclic code of odd length $n$ over $S$.
Then $C$ is reversible complement over $S$ iff $C$ is reversible over $S$
and $(\overline{0},\overline{0},...,\overline{0})\in C.$
\end{theorem}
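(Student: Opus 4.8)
The plan is to reduce the whole statement to one structural fact: complementation is an affine reflection on $S^{n}$. I would first record its pointwise form. Writing $\overline{0}=(3+3w)+v(3+3w)$ for the complement of the zero element, the lemma that $c+\overline{c}=(3+3w)+v(3+3w)$ for every $c\in S$ gives $\overline{c}=\overline{0}-c$; applied coordinatewise this means that for any $x=(x_{0},\dots,x_{n-1})\in S^{n}$,
\begin{equation*}
x^{c}=(\overline{0},\overline{0},\dots,\overline{0})-x .
\end{equation*}
Alongside this I would use two elementary remarks: that $x^{rc}=(x^{r})^{c}=(x^{c})^{r}$, since complementation is applied entry by entry and hence commutes with reversal of positions, and that the constant vector $(\overline{0},\dots,\overline{0})$ is fixed by the reverse map because all of its coordinates agree.

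For the forward implication, assume $C$ is reversible complement. Linearity gives $(0,\dots,0)\in C$, so its reversible complement $(0,\dots,0)^{rc}=(\overline{0},\dots,\overline{0})$ lies in $C$ as well, which is exactly the second asserted condition. For reversibility, take $x\in C$; then $x^{rc}\in C$, and combining the identities above,
\begin{equation*}
x^{rc}=(x^{c})^{r}=\big((\overline{0},\dots,\overline{0})-x\big)^{r}=(\overline{0},\dots,\overline{0})-x^{r}.
\end{equation*}
Since $(\overline{0},\dots,\overline{0})\in C$ and $x^{rc}\in C$, linearity yields $(\overline{0},\dots,\overline{0})-x^{rc}=x^{r}\in C$, so $C$ is reversible.

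For the converse, assume $C$ is reversible and $(\overline{0},\dots,\overline{0})\in C$. For $x\in C$, reversibility gives $x^{r}\in C$, whence
\begin{equation*}
x^{rc}=(x^{r})^{c}=(\overline{0},\dots,\overline{0})-x^{r}\in C
\end{equation*}
by linearity, since both $(\overline{0},\dots,\overline{0})$ and $x^{r}$ lie in $C$. Hence $C$ is reversible complement.

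I do not anticipate a real obstacle: everything follows once complementation is identified with the map $x\mapsto(\overline{0},\dots,\overline{0})-x$, which turns the reverse-complement condition into the reverse condition up to the membership of the fixed vector $(\overline{0},\dots,\overline{0})$. The only points needing care are keeping track of the order of reversal and complementation and checking that $(\overline{0},\dots,\overline{0})$ is reverse-invariant. I would also note that neither the cyclicity hypothesis nor the decomposition $C=vC_{1}\oplus(1-v)C_{2}$ is actually used beyond ensuring that $C$ is linear, so the equivalence in fact holds for every linear code $C$ of length $n$ over $S$.
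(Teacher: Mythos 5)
Your proof is correct and is essentially the paper's own argument: the paper also obtains $(\overline{0},\dots,\overline{0})\in C$ as the reverse complement of the zero word and then passes between $c^{r}$ and $c^{rc}$ by linearity, using its Lemma stating $\overline{a}+3\overline{0}=3a$, which (since $3=-1$ in a ring of characteristic $4$) is exactly your identity $\overline{a}=\overline{0}-a$. Your closing observation that only linearity of $C$ is used (not cyclicity or the decomposition $C=vC_{1}\oplus(1-v)C_{2}$) applies equally to the paper's proof.
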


\begin{proof}
Since $C$ is reversible complement, for any $c=(c_{0},c_{1},...,c_{n-1})\in
C,c^{rc}=(\overline{c}_{n-1},\overline{c}_{n-2},...,\overline{c}_{0})\in C$.
Since $C$ is a linear code, so $(0,0,...,0)\in C$. Since $C$ is reversible
complement, so $(\overline{0},\overline{0},...,\overline{0})\in C$. By using
Lemma 15, we have 
\begin{equation*}
3c^{r}=3(c_{n-1},c_{n-2},...,c_{0})=(\overline{c}_{n-1},\overline{c}%
_{n-2},...,\overline{c}_{0})+3(\overline{0},\overline{0},...,\overline{0}%
)\in C
\end{equation*}%
So, for any $c\in C$, we have $c^{r}\in C.$

On the other hand, let $C$ be reversible. So, for any $%
c=(c_{0},c_{1},...,c_{n-1})\in C,c^{r}=(c_{n-1},c_{n-2},...,c_{0})\in C$. To
show that $C$ is reversible complement, for any $c\in C,$%
\begin{equation*}
c^{rc}=(\overline{c}_{n-1},\overline{c}_{n-2},...,\overline{c}%
_{0})=3(c_{n-1},c_{n-2},...,c_{0})+(\overline{0},\overline{0},...,\overline{0%
})\in C
\end{equation*}%
So, $C$ is reversible complement.
\end{proof}

\begin{lemma}
For any $a,b\in S$, $\overline{a+b}=\overline{a}+\overline{b}-3(1+w)(1+v).$
\end{lemma}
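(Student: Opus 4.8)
The plan is to reduce everything to Lemma 14, which states that $c+\overline{c}=(3+3w)+v(3+3w)$ for every $c\in S$. Writing $K:=(3+3w)+v(3+3w)$, this identity says that the complement map is an affine involution, namely $\overline{c}=K-c$. So the first step is to record this reformulation of Lemma 14 and then apply it three times: once with $c=a+b$, once with $c=a$, and once with $c=b$.

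From $\overline{a+b}=K-(a+b)$ together with $\overline{a}+\overline{b}=(K-a)+(K-b)=2K-a-b$, subtracting the first relation from the second gives $\overline{a}+\overline{b}-\overline{a+b}=K$, i.e.\ $\overline{a+b}=\overline{a}+\overline{b}-K$. This is the entire computational core of the lemma; no induction and no case analysis on the four $Z_4$-coordinates of an element of $S$ is needed, because Lemma 14 already packages the complement as a single ring element shifted by the fixed constant $K$.

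The only remaining task is to identify $K$ with the constant appearing in the statement. I would expand $3(1+w)(1+v)=3(1+v+w+wv)=3+3v+3w+3wv$ in $S$ and compare it with $K=(3+3w)+v(3+3w)=3+3w+3v+3wv$; the two expressions agree term by term in the free $Z_4$-basis $\{1,w,v,wv\}$ of $S$. Substituting $K=3(1+w)(1+v)$ into $\overline{a+b}=\overline{a}+\overline{b}-K$ then yields exactly $\overline{a+b}=\overline{a}+\overline{b}-3(1+w)(1+v)$.

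I do not anticipate a genuine obstacle here, since the claim is essentially a restatement of the fact that the complement is affine with a fixed translation term, and Lemma 14 supplies precisely that term. The only point requiring a little care is that all arithmetic is carried out modulo $4$ (for instance $3+3=2$ in $Z_4$), so the coefficient comparisons in the basis $\{1,w,v,wv\}$ must be read in $Z_4$; this is routine and completes the proof.
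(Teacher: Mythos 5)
Your proof is correct. The paper states this lemma without any proof, so there is no official argument to compare against; your derivation from Lemma 14 --- rewriting it as the affine identity $\overline{c}=K-c$ with $K=(3+3w)+v(3+3w)$, applying it to $a+b$, $a$, $b$, and then checking in the basis $\{1,w,v,wv\}$ that $K=3+3w+3v+3wv=3(1+w)(1+v)$ in $Z_4$ --- is exactly the natural argument and fills that gap cleanly.
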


\begin{theorem}
Let $D_{1}$ and $D_{2}$ be two reversible complement cyclic codes of length $%
n$ over $S$. Then $D_{1}+D_{2}$ and $D_{1}\cap D_{2}$ are reversible
complement cyclic codes.
\end{theorem}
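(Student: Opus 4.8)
The plan is to reduce both claims to the characterization proved in Theorem 16: a cyclic code $C$ of odd length $n$ over $S$ is reversible complement if and only if $C$ is reversible and the vector $(\overline{0},\overline{0},\ldots,\overline{0})$ lies in $C$ (every cyclic code of odd length over $S$ has the form $vC_1\oplus(1-v)C_2$ by Theorem 10 and Corollary 11, so Theorem 16 indeed applies to it). Assuming $n$ is odd, for each of $D_1+D_2$ and $D_1\cap D_2$ I would check three properties in succession — that it is a cyclic code over $S$, that it is reversible, and that it contains $(\overline{0},\overline{0},\ldots,\overline{0})$ — and then invoke Theorem 16 in the reverse direction. Routing through Theorem 16 is what makes the argument clean: the map $x\mapsto x^{rc}$ is affine rather than linear, since by Lemma 14 the complement satisfies $x^c=(\overline{0},\overline{0},\ldots,\overline{0})-x$ componentwise, so one cannot argue closure under $rc$ directly for a sum; Theorem 16 isolates the linear part (reversibility) from the constant anchor $(\overline{0},\ldots,\overline{0})$.

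For cyclicity I would use that the shift $\sigma$ of the preliminaries is a linear operator on $S^n$ under which both $D_1$ and $D_2$ are invariant. If $c=c_1+c_2$ with $c_i\in D_i$, then $\sigma(c)=\sigma(c_1)+\sigma(c_2)\in D_1+D_2$; and if $c\in D_1\cap D_2$, then $\sigma(c)$ lies in each $D_i$, hence in $D_1\cap D_2$. Since the sum and intersection of two $S$-submodules of $S^n$ are again $S$-submodules, both $D_1+D_2$ and $D_1\cap D_2$ are cyclic codes over $S$.

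For reversibility I would use that the reverse map $x\mapsto x^r$ is linear, being a permutation of coordinates, so that $(c_1+c_2)^r=c_1^r+c_2^r$. By Theorem 16 each $D_i$ is reversible. Hence for $c=c_1+c_2\in D_1+D_2$ we get $c^r=c_1^r+c_2^r\in D_1+D_2$, and for $c\in D_1\cap D_2$ we get $c^r\in D_1$ and $c^r\in D_2$, so $c^r\in D_1\cap D_2$. Thus both codes are reversible.

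Finally, Theorem 16 applied to each $D_i$ gives $(\overline{0},\ldots,\overline{0})\in D_i$, so this vector belongs to $D_1\subseteq D_1+D_2$ and, lying in both $D_1$ and $D_2$, also to $D_1\cap D_2$. Applying Theorem 16 once more to $D_1+D_2$ and to $D_1\cap D_2$ then yields that each is reversible complement, which completes the proof. I do not anticipate a genuine obstacle here: once the reduction through Theorem 16 is in place, everything follows from linearity of $\sigma$ and of the reverse map together with the fact that submodule sums and intersections are submodules. The only points deserving attention are to record that $n$ is odd so that Theorem 16 applies, and to note that it is precisely the presence of the common anchor $(\overline{0},\ldots,\overline{0})$ in both $D_i$ that survives passage to the sum and to the intersection.
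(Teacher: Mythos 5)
Your proof is correct, but it takes a genuinely different route from the paper's. The paper handles the sum $D_{1}+D_{2}$ by a direct computation: writing an arbitrary element as $d_{1}+d_{2}$ with $d_{i}\in D_{i}$ and applying Lemma 17 ($\overline{a+b}=\overline{a}+\overline{b}-3(1+w)(1+v)$) coordinatewise, it splits $(d_{1}+d_{2})^{rc}$ as $\bigl(d_{1}^{rc}-3(1+w)(1+v)(1,\ldots,1)\bigr)+d_{2}^{rc}$, where the first summand lies in $D_{1}$ because $(0,\ldots,0)^{rc}=3(1+w)(1+v)(1,\ldots,1)\in D_{1}$ and $D_1$ is linear, and the second lies in $D_{2}$; the intersection case is then dismissed as clear (indeed it is: $c\in D_{1}\cap D_{2}$ gives $c^{rc}\in D_{1}$ and $c^{rc}\in D_{2}$ directly, with no lemma needed). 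You instead reduce both cases to the characterization of Theorem 16 (reversible complement $\Leftrightarrow$ reversible and $(\overline{0},\ldots,\overline{0})\in C$), after verifying cyclicity, reversibility, and membership of the anchor vector. Both arguments rest on the same underlying fact, namely that complementation is the affine map $x\mapsto\overline{0}-x$, so the all-$\overline{0}$ vector is what absorbs the nonlinearity of $x\mapsto x^{rc}$; your version makes this structural point explicit, treats sum and intersection uniformly, and reuses earlier results rather than redoing a coordinate computation, which is arguably cleaner. What you give up is generality of hypotheses: the theorem as stated does not assume $n$ odd, and the paper's direct computation is valid for any length, whereas your reduction inherits the odd-length (and $vC_{1}\oplus(1-v)C_{2}$) hypotheses of Theorem 16 as stated. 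You flag this yourself, and the restriction is in fact cosmetic, since the paper's proof of Theorem 16 uses only linearity of $C$ and Lemma 15, never oddness or the decomposition; but a fully self-contained version of your argument would need to record that observation.
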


\begin{proof}
Let $d_{1}=(c_{0},c_{1},...,c_{n-1})\in
D_{1},d_{2}=(c_{0}^{1},c_{1}^{1},...,c_{n-1}^{1})\in D_{2}.$ Then,

\begin{eqnarray*}
(d_{1}+d_{2})^{rc} &=&\left( \overline{(c_{n-1}+c_{n-1}^{1})},...,\overline{%
(c_{1}+c_{1}^{1})},\overline{(c_{0}+c_{0}^{1})}\right) \\
&=&\left( \overline{c_{n-1}}+\overline{c_{n-1}^{1}}-3(1+w)(1+v),...,%
\overline{c_{0}}+\overline{c_{0}^{1}}-3(1+w)(1+v)\right) \\
&=&\left( \overline{c_{n-1}}-3(1+w)(1+v),...,\overline{c_{0}}%
-3(1+w)(1+v)\right) +\left( \overline{c_{n-1}^{1}},...,\overline{c_{0}^{1}}%
\right) \\
&=&\left( d_{1}^{rc}-3(1+w)(1+v)\frac{x^{n}-1}{x-1}\right) +d_{2}^{rc}\in
D_{1}+D_{2}
\end{eqnarray*}%
This shows that $D_{1}+D_{2}$ is reversible complement cyclic code. It is
clear that $D_{1}\cap D_{2}$ is reversible complement cyclic code.
\end{proof}

\section{Binary images of cyclic DNA codes over $R$}

The 2-adic expansion of $c\in Z_{4}$ is $c=\alpha (c)+2\beta (c)$ such that $%
\alpha (c)+\beta (c)+\gamma (c)=0$ for all $c\in Z_{4}$%
\begin{equation*}
\begin{tabular}{cccc}
$c$ & $\alpha (c)$ & $\beta (c)$ & $\gamma (c)$ \\ 
$0$ & $0$ & $0$ & $0$ \\ 
$1$ & $1$ & $0$ & $1$ \\ 
$2$ & $0$ & $1$ & $1$ \\ 
$3$ & $1$ & $1$ & $0$%
\end{tabular}%
\end{equation*}

The Gray map is given by 
\begin{eqnarray*}
\Psi &:&Z_{4}\longrightarrow Z_{2}^{2} \\
c &\longmapsto &\Psi (c)=\left( \beta (c),\gamma (c)\right)
\end{eqnarray*}%
for all $c\in Z_{4}$ in [14]. Define%
\begin{eqnarray*}
\breve{O} &:&R\longrightarrow Z_{2}^{4} \\
a+bw &\longmapsto &\breve{O}(a+wb)=\Psi \left( \phi \left( a+wb\right)
\right) \\
&=&\Psi (a,b) \\
&=&\left( \beta (a),\gamma (a),\beta (b),\gamma (b)\right)
\end{eqnarray*}

Let $a+wb$ be any element of the ring $R$. The Lee weight $w_{L}$ of the
ring $R$ is defined as follows%
\begin{equation*}
w_{L}(a+wb)=w_{L}(a,b)
\end{equation*}%
where $w_{L}(a,b)$ described the usual Lee weight on $Z_{4}^{2}$. For any $%
c_{1},c_{2}\in R$ the Lee distance $d_{L}$ is given by $%
d_{L}(c_{1},c_{2})=w_{L}(c_{1}-c_{2}).$

The Hamming distance $d(c_{1},c_{2})$ between two codewords $c_{1}$ and $%
c_{2}$ is the Hamming weight of the codewords $c_{1}-c_{2}.$%
\begin{equation*}
\begin{tabular}{ccc}
$AA$ & $\longrightarrow $ & $0000$ \\ 
$CA$ & $\longrightarrow $ & $0100$ \\ 
$GA$ & $\longrightarrow $ & $1100$ \\ 
$TA$ & $\longrightarrow $ & $1000$ \\ 
$AC$ & $\longrightarrow $ & $0001$ \\ 
$AG$ & $\longrightarrow $ & $0011$ \\ 
$AT$ & $\longrightarrow $ & $0010$ \\ 
$CC$ & $\longrightarrow $ & $0101$%
\end{tabular}%
\begin{tabular}{ccc}
$CG$ & $\longrightarrow $ & $0111$ \\ 
$CT$ & $\longrightarrow $ & $0110$ \\ 
$GC$ & $\longrightarrow $ & $1101$ \\ 
$GG$ & $\longrightarrow $ & $1111$ \\ 
$GT$ & $\longrightarrow $ & $1110$ \\ 
$TC$ & $\longrightarrow $ & $1001$ \\ 
$TG$ & $\longrightarrow $ & $1011$ \\ 
$TT$ & $\longrightarrow $ & $1010$%
\end{tabular}%
\end{equation*}

\begin{lemma}
The Gray map $\breve{O}$ is a distance preserving map from ($R^{n}$, Lee
distance) to ($Z_{2}^{4n}$, Hamming distance). It is also $Z_{2}$-linear.
\end{lemma}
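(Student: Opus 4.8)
The plan is to reduce the statement to the behaviour of $\breve{O}$ on a single symbol and then to a finite verification. Since $\breve{O}$ acts coordinatewise, sending the $i$-th entry of a word in $R^{n}$ to a disjoint block of four bits in $Z_{2}^{4n}$, and since both the Lee distance on $R^{n}$ and the Hamming distance on $Z_{2}^{4n}$ are sums of their per-coordinate contributions, it suffices to prove both assertions for the single-symbol map $\breve{O}:R\rightarrow Z_{2}^{4}$. Concretely, for $x=(x_{1},\dots,x_{n})$ and $y=(y_{1},\dots,y_{n})$ in $R^{n}$ one has $d_{L}(x,y)=\sum_{i}d_{L}(x_{i},y_{i})$ and $d_{H}(\breve{O}(x),\breve{O}(y))=\sum_{i}d_{H}(\breve{O}(x_{i}),\breve{O}(y_{i}))$, so a symbolwise equality of these quantities, together with symbolwise additivity, propagates to all of $R^{n}$.

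For distance preservation I would factor $\breve{O}=\Psi\circ\phi$, where $\phi(a+wb)=(a,b)$ and $\Psi$ is extended to $Z_{4}^{2}\rightarrow Z_{2}^{4}$ by acting on each $Z_{4}$-coordinate. The map $\phi$ preserves the Lee weight by the defining relation $w_{L}(a+wb)=w_{L}(a,b)$, hence it is an isometry from $(R,\text{Lee})$ onto $(Z_{4}^{2},\text{Lee})$, and the problem collapses to showing that $\Psi:(Z_{4},\text{Lee})\rightarrow(Z_{2}^{2},\text{Hamming})$ is an isometry. This is a finite check: one verifies $w_{H}(\Psi(c))=w_{L}(c)$ for $c=0,1,2,3$, giving the values $0,1,2,1$, and more generally $d_{H}(\Psi(c),\Psi(c'))=d_{L}(c,c')$ on the six unordered pairs. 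A uniform way to see this is that the images $00,01,11,10$ run through the Hamming $4$-cycle in the same cyclic order in which $0,1,2,3$ run through the Lee $4$-cycle on $Z_{4}$, so adjacent symbols map to Hamming distance $1$ and the antipodal pairs $\{0,2\}$ and $\{1,3\}$ map to Hamming distance $2$. With the coordinatewise reduction this yields distance preservation on all of $R^{n}$.

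The $Z_{2}$-linearity is the delicate point and I expect it to be the main obstacle. Scalar multiplication by $Z_{2}=\{0,1\}$ is automatic, so linearity reduces to additivity, $\breve{O}(x+y)=\breve{O}(x)+\breve{O}(y)$ in $Z_{2}^{4}$ for all $x,y\in R$, and by the factorisation this is equivalent to $\Psi(c+c')=\Psi(c)+\Psi(c')$ for all $c,c'\in Z_{4}$, i.e.\ to the additivity modulo $2$ of the bit-maps $\beta$ and $\gamma$ of the $2$-adic expansion. Here care is genuinely required, because the high-order bit $\beta(c)=\lfloor c/2\rfloor$ encodes a carry, so the additive identity has to be examined on precisely those pairs that generate a carry, such as $1+1$, $1+3$ and $3+3$, rather than assumed. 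I would therefore isolate this carry behaviour as the crux, test the additive identity on a generating set of $(Z_{4},+)$, and only then lift it to $R$ coordinatewise; note that the distance-preserving part established above does not use additivity and so is unaffected by this issue.
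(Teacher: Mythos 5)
Your coordinatewise reduction and your treatment of distance preservation are correct, and they take a genuinely different --- and sounder --- route than the paper does. The paper deduces the isometry from the identity $\breve{O}(c_{1}-c_{2})=\breve{O}(c_{1})-\breve{O}(c_{2})$, which is false (take $c_{1}=2$, $c_{2}=1$ in $R$: then $\breve{O}(c_{1}-c_{2})=\breve{O}(1)=(0,1,0,0)$ while $\breve{O}(2)-\breve{O}(1)=(1,0,0,0)$). Your argument instead rests only on the finite, and true, check that $\Psi$ carries the Lee $4$-cycle on $Z_{4}$ onto the Hamming $4$-cycle $00,01,11,10$, so that $d_{H}(\Psi(c),\Psi(c'))=d_{L}(c,c')$ for all $c,c'\in Z_{4}$, together with the additivity of both metrics over coordinates. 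That is the correct way to prove the first assertion, and it is independent of any linearity of the map, as you note.

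The genuine gap is the linearity half, and it is not a gap you can close: the additivity you postpone to a final verification fails, and it fails exactly on the carry pairs you single out. Writing $c=\alpha(c)+2\beta(c)$, one finds $\beta(c+c')\equiv\beta(c)+\beta(c')+\alpha(c)\alpha(c')\pmod 2$ and $\gamma\equiv\alpha+\beta\pmod 2$, hence $\Psi(c+c')=\Psi(c)+\Psi(c')+\alpha(c)\alpha(c')\,(1,1)$; so additivity fails precisely when $c$ and $c'$ are both odd. Concretely, $\Psi(1+1)=\Psi(2)=(1,1)\neq(0,0)=\Psi(1)+\Psi(1)$, and similarly for $1+3$ and $3+3$; lifting through $\phi$, $\breve{O}(1+1)=(1,1,0,0)\neq(0,0,0,0)=\breve{O}(1)+\breve{O}(1)$. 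Therefore $\breve{O}$ is not additive, and the second assertion of the lemma --- along with the identity $\breve{O}(k_{1}c_{1}+k_{2}c_{2})=k_{1}\breve{O}(c_{1})+k_{2}\breve{O}(c_{2})$ that the paper's proof asserts without any verification --- is false as stated. This is the classical phenomenon underlying $Z_{4}$-linear codes: the Gray map is an isometry but not a group homomorphism, which is exactly why Gray images of linear codes over $Z_{4}$ (hence over $R$) are in general nonlinear, merely distance-invariant, binary codes. Your instinct that the carry in the high-order bit is the crux was exactly right; carrying out the test you propose yields a counterexample rather than a proof, so no strategy for this half can succeed, and the paper's own one-line proof of it is simply wrong.
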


\begin{proof}
For $c_{1},c_{2}\in R^{n}$, we have $\breve{O}(c_{1}-c_{2})=\breve{O}(c_{1})-%
\breve{O}(c_{2})$. So, $d_{L}(c_{1},c_{2})=w_{L}(c_{1}-c_{2})=w_{H}(\breve{O}%
(c_{1}-c_{2}))=w_{H}(\breve{O}(c_{1})-\breve{O}(c_{2}))=d_{H}(\breve{O}%
(c_{1}),\breve{O}(c_{2}))$. So, the Gray map $\breve{O}$ is distance
preserving map. For any $c_{1},c_{2}\in R^{n},k_{1},k_{2}\in Z_{2},$we have $%
\breve{O}(k_{1}c_{1}+k_{2}c_{2})=k_{1}\breve{O}(c_{1})+k_{2}\breve{O}(c_{2})$%
. Thus, $\breve{O}$ is $Z_{2}$-linear.
\end{proof}

\begin{proposition}
Let $\sigma $ be the cyclic shift of $R^{n}$ and $\upsilon $ be the
4-quasi-cyclic shift of $Z_{2}^{4n}$. Let $\breve{O}$ be the Gray map from $%
R^{n}$ to $Z_{2}^{4n}$. Then $\breve{O}\sigma =\upsilon \breve{O}.$
\end{proposition}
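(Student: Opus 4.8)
The plan is to verify the identity directly by evaluating both composite maps on an arbitrary vector and checking that they agree; since every map involved acts symbol-by-symbol, no structural theory is needed, only a careful alignment of conventions.

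First I would fix notation. Write a generic element of $R^n$ as $c=(c_0,c_1,\dots,c_{n-1})$ with $c_i=a_i+wb_i$, $a_i,b_i\in Z_4$. By the definition of $\breve{O}$, the Gray image is the concatenation of the $n$ length-$4$ blocks
\[
\breve{O}(c)=\bigl(\breve{O}(c_0),\breve{O}(c_1),\dots,\breve{O}(c_{n-1})\bigr),\qquad \breve{O}(c_i)=\bigl(\beta(a_i),\gamma(a_i),\beta(b_i),\gamma(b_i)\bigr)\in Z_2^4 .
\]
Thus $\breve{O}$ partitions $Z_2^{4n}$ into $n$ consecutive blocks of length $4$, the $i$-th block depending only on the symbol $c_i$. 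The next step is to record what the two shifts do at this block level. The cyclic shift $\sigma$ sends $c$ to $(c_{n-1},c_0,\dots,c_{n-2})$, i.e. it cyclically permutes the symbols. The $4$-quasi-cyclic shift $\upsilon$ of $Z_2^{4n}$ is the cyclic shift by $4$ coordinate positions, which is precisely the cyclic permutation of the $n$ length-$4$ blocks: it sends $(B_0,B_1,\dots,B_{n-1})$ to $(B_{n-1},B_0,\dots,B_{n-2})$ for blocks $B_i\in Z_2^4$.

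With this in hand the computation is immediate. Applying $\breve{O}$ after $\sigma$ gives
\[
\breve{O}(\sigma(c))=\breve{O}(c_{n-1},c_0,\dots,c_{n-2})=\bigl(\breve{O}(c_{n-1}),\breve{O}(c_0),\dots,\breve{O}(c_{n-2})\bigr),
\]
because $\breve{O}$ is applied to each symbol separately. On the other side, taking the block shift of the Gray image gives
\[
\upsilon(\breve{O}(c))=\upsilon\bigl(\breve{O}(c_0),\dots,\breve{O}(c_{n-1})\bigr)=\bigl(\breve{O}(c_{n-1}),\breve{O}(c_0),\dots,\breve{O}(c_{n-2})\bigr).
\]
The two right-hand sides coincide for every $c\in R^n$, so $\breve{O}\sigma=\upsilon\breve{O}$.

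The only point requiring care, and hence the main obstacle, is bookkeeping about conventions rather than mathematics: one must use the block (interleaved) ordering of the Gray image, in which the four bits coming from each symbol $c_i$ are kept together, and one must read $\upsilon$ as the shift by a full block of $4$ coordinates. Once these two conventions are aligned, the identity is forced by the fact that $\breve{O}$ is defined componentwise, so permuting the symbols and permuting their $4$-bit images are the same operation. I would state this alignment explicitly, since under the alternative gathered ordering (all $\beta(a_i)$ first, then all $\gamma(a_i)$, and so on) the analogous statement would instead involve an ordinary cyclic shift acting on each of four length-$n$ coordinate-blocks, and the proposition as stated would require the block ordering.
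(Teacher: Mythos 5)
Your proof is correct and takes essentially the same approach as the paper: both verify the identity by evaluating $\breve{O}\sigma$ and $\upsilon\breve{O}$ on an arbitrary $c\in R^n$ and observing that, since $\breve{O}$ acts symbol-by-symbol and $\upsilon$ shifts by one full length-$4$ block, the two results coincide. The paper writes the computation out with explicit coordinates $\beta(a_{1i}),\gamma(a_{1i}),\beta(b_{2i}),\gamma(b_{2i})$, whereas you phrase it at the block level; your closing remark on the interleaved (blockwise) ordering is well taken, since it is exactly the convention the paper's own computation uses, even though its earlier componentwise extension of $\phi$ is written in the gathered form.
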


\begin{proof}
Let $c=(c_{0},c_{1},...,c_{n-1})\in R^{n}$, we have $c_{i}=a_{1i}+wb_{2i}$
with $a_{1i},b_{2i}\in Z_{4},0\leq i\leq n-1$. By applying the Gray map, we
have%
\begin{equation*}
\breve{O}(c)=\left( 
\begin{array}{c}
\beta (a_{10}),\gamma (a_{10}),\beta (b_{20}),\gamma (b_{20}),\beta
(a_{11}),\gamma (a_{11}),\beta (b_{21}),\gamma (b_{21}),..., \\ 
\beta (a_{1n-1}),\gamma (a_{1n-1}),\beta (b_{2n-1}),\gamma (b_{2n-1})%
\end{array}%
\right) .
\end{equation*}%
Hence 
\begin{equation*}
\upsilon (\breve{O}(c))=\left( 
\begin{array}{c}
\beta (a_{1n-1}),\gamma (a_{1n-1}),\beta (b_{2n-1}),\gamma (b_{2n-1}),\beta
(a_{10}),\gamma (a_{10}),\beta (b_{20}), \\ 
\gamma (b_{20}),...,\beta (a_{1n-2}),\gamma (a_{1n-2}),\beta
(b_{2n-2}),\gamma (b_{2n-2})%
\end{array}%
\right) .
\end{equation*}

On the other hand, $\sigma (c)=(c_{n-1},c_{0},c_{1},...,c_{n-2})$. We have 
\begin{equation*}
\breve{O}(\sigma (c))=\left( 
\begin{array}{c}
\beta (a_{1n-1}),\gamma (a_{1n-1}),\beta (b_{2n-1}),\gamma (b_{2n-1}),\beta
(a_{10}),\gamma (a_{10}), \\ 
\beta (b_{20}),\gamma (b_{20}),...,\beta (a_{1n-2}),\gamma (a_{1n-2}),\beta
(b_{2n-2}),\gamma (b_{2n-2})%
\end{array}%
\right) .
\end{equation*}%
\newline
Therefore, $\breve{O}\sigma =\upsilon \breve{O}.$
\end{proof}

\begin{theorem}
If $C$ is a cyclic DNA code of length $n$ over $R$ then $\breve{O}$ is a
binary quasi-cyclic DNA code of length $4n$ with index $4$.
\end{theorem}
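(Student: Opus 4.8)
The plan is to read the quasi-cyclic structure straight off the intertwining relation $\breve{O}\sigma=\upsilon\breve{O}$ proved in the preceding Proposition, and to obtain binary linearity from the preceding Lemma. (The statement is presumably meant to read ``$\breve{O}(C)$ is a binary quasi-cyclic DNA code of length $4n$ with index $4$''.) First I would use that $C$, being cyclic over $R$, is invariant under the cyclic shift $\sigma$, i.e. $\sigma(C)=C$. Applying $\breve{O}$ and using $\breve{O}\sigma=\upsilon\breve{O}$ gives
\[
\upsilon\big(\breve{O}(C)\big)=\breve{O}\big(\sigma(C)\big)=\breve{O}(C),
\]
so the binary image is fixed by the $4$-quasi-cyclic shift $\upsilon$.

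Next I would check that $\breve{O}(C)$ is a binary linear code. Since $C$ is $R$-linear it is in particular closed under addition, and since $\breve{O}$ is additive (being $Z_2$-linear by the preceding Lemma) the image $\breve{O}(C)$ is closed under addition in $Z_2^{4n}$; over $Z_2$, closure under addition already forces $\breve{O}(C)$ to be a $Z_2$-subspace, as $v+v=0$ and $1\cdot v=v$ cover the only scalars. A binary linear code of length $4n$ that is stable under the shift $\upsilon$ by four coordinates is, by definition, quasi-cyclic of index $4$, which yields the quasi-cyclic part of the claim.

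Finally, for the DNA part I would invoke that $\breve{O}=\Psi\circ\phi$ realizes precisely the bijection between the $16$ elements of $R$ (equivalently the $16$ DNA double-pairs over $\{A,T,G,C\}^2$) and the $16$ binary $4$-tuples recorded in the table preceding this theorem. Hence each length-$n$ word over $R$ corresponds to a binary word of length $4n$ that reads as a string of DNA double-pairs, and because this correspondence is compatible with the Watson--Crick complement, the DNA constraints satisfied by $C$ are transported to $\breve{O}(C)$.

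The only substantive point, the $\upsilon$-invariance, is already supplied by the Proposition, so the remaining work is the bookkeeping of linearity together with the more delicate verification that the complement operation on $R$ matches the complement operation induced on binary $4$-tuples through the table; this compatibility is exactly what guarantees that the reverse-complement property of $C$ survives under $\breve{O}$, and I expect it to be the main thing to pin down carefully.
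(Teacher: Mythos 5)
Your proof is correct and takes essentially the same route as the paper: the paper actually states this theorem without proof, but its proof of the exact analogue over $S$ (the theorem following Proposition 23) is precisely your core step --- $\sigma(C)=C$ combined with the intertwining relation $\breve{O}\sigma =\upsilon \breve{O}$ from the preceding Proposition to conclude $\upsilon (\breve{O}(C))=\breve{O}(C)$. Your additional remarks on $Z_{2}$-linearity (leaning on the paper's own Lemma 19) and on transporting the DNA alphabet through the codon table are bookkeeping that the paper simply omits, so there is no substantive difference in approach.
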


\section{Binary image of cyclic DNA codes over $S$}

We define%
\begin{eqnarray*}
\widetilde{\Psi } &:&S\longrightarrow Z_{4}^{4} \\
a_{0}+wa_{1}+va_{2}+wva_{3} &\longmapsto &(a_{0},a_{1},a_{2},a_{3})
\end{eqnarray*}%
where $a_{i}\in Z_{4}$ for $i=0,1,2,3$.

Now, we define%
\begin{eqnarray*}
\Theta  &:&S\longrightarrow Z_{2}^{8} \\
a_{0}+wa_{1}+va_{2}+wva_{3} &\longmapsto &\Theta
(a_{0}+wa_{1}+va_{2}+wva_{3})=\Psi (\widetilde{\Psi }%
(a_{0}+wa_{1}+va_{2}+wva_{3})) \\
&=&\left( \beta (a_{0}),\gamma (a_{0}),\beta (a_{1}),\gamma (a_{1}),\beta
(a_{2}),\gamma (a_{2}),\beta (a_{3}),\gamma (a_{3})\right) 
\end{eqnarray*}%
where $\Psi $ is the Gray map $Z_{4}$ to $Z_{2}^{2}.$

Let $a_{0}+wa_{1}+va_{2}+wva_{3}$ be any element of the ring $S$. The Lee
weight $w_{L}$ of the ring $S$ is defined as%
\begin{equation*}
w_{L}(a_{0}+wa_{1}+va_{2}+wva_{3})=w_{L}((a_{0},a_{1},a_{2},a_{3}))
\end{equation*}%
where $w_{L}((a_{0},a_{1},a_{2},a_{3}))$ described the usual Lee weight on $%
Z_{4}^{4}$. For any $c_{1},c_{2}\in S,$ the Lee distance $d_{L}$ is given by 
$d_{L}(c_{1},c_{2})=w_{L}(c_{1}-c_{2}).$

The Hamming distance $d(c_{1},c_{2})$ between two codewords $c_{1}$ and $%
c_{2}$ is the Hamming weight of the codewords $c_{1}-c_{2}.$

Binary image of the codons;%
\begin{equation*}
\begin{tabular}{ccc}
$AAAA$ & $\longrightarrow $ & $00000000$ \\ 
$AACA$ & $\longrightarrow $ & $00000100$ \\ 
$AAGA$ & $\longrightarrow $ & $00001100$ \\ 
$AATA$ & $\longrightarrow $ & $00001000$ \\ 
$\vdots $ & $\vdots $ & $\vdots $%
\end{tabular}%
\end{equation*}

\begin{lemma}
The Gray map $\Theta $ is a distance preserving map from ($S^{n}$, Lee
distance) to ($Z_{2}^{8n}$, Hamming distance). It is also $Z_{2}$-linear.
\end{lemma}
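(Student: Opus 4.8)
The plan is to factor $\Theta$ through the two maps out of which it is built and to treat both claims as the exact four-coordinate analogue of Lemma 20. Writing $\Psi^{(4)}\colon Z_4^4\to Z_2^8$ for the coordinatewise extension of the classical Gray map $\Psi\colon Z_4\to Z_2^2$, we have by definition
\begin{equation*}
\Theta=\Psi^{(4)}\circ\widetilde{\Psi},
\end{equation*}
and both maps are extended to length-$n$ vectors componentwise, so $\Theta\colon S^n\to Z_2^{8n}$ likewise factors as $\Psi^{(4)}\circ\widetilde{\Psi}$ on $S^n$. The map $\widetilde{\Psi}$ is a $Z_4$-module isomorphism, since it merely records the four $Z_4$-coordinates of each entry, and the Lee weight on $S$ was defined precisely so that $w_L(c)=w_L(\widetilde{\Psi}(c))$ for every $c$.

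First I would establish distance preservation. For $c_1,c_2\in S^n$, additivity of $\widetilde{\Psi}$ gives $\widetilde{\Psi}(c_1)-\widetilde{\Psi}(c_2)=\widetilde{\Psi}(c_1-c_2)$, whence
\begin{equation*}
d_L(c_1,c_2)=w_L(c_1-c_2)=w_L\bigl(\widetilde{\Psi}(c_1-c_2)\bigr)=d_L\bigl(\widetilde{\Psi}(c_1),\widetilde{\Psi}(c_2)\bigr),
\end{equation*}
the last quantity being the Lee distance on $Z_4^{4n}$. It then remains to invoke the standard fact that the classical Gray map $\Psi$ is an isometry from $(Z_4,\text{Lee})$ to $(Z_2^2,\text{Hamming})$; applying this in each of the $4n$ coordinates shows that $\Psi^{(4)}$ carries Lee distance on $Z_4^{4n}$ to Hamming distance on $Z_2^{8n}$, so that $d_L(\widetilde{\Psi}(c_1),\widetilde{\Psi}(c_2))=d_H(\Theta(c_1),\Theta(c_2))$. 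Chaining the two equalities yields $d_L(c_1,c_2)=d_H(\Theta(c_1),\Theta(c_2))$, exactly as in Lemma 20. For the $Z_2$-linearity I would reproduce the coordinatewise verification of Lemma 20 to obtain $\Theta(k_1c_1+k_2c_2)=k_1\Theta(c_1)+k_2\Theta(c_2)$ for $k_1,k_2\in Z_2$.

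The step I expect to be the genuine content, rather than bookkeeping, is the isometry of the classical Gray map. It is tempting to justify distance preservation in one line via $\Theta(c_1-c_2)=\Theta(c_1)-\Theta(c_2)$, in parallel with Lemma 20, but this deserves care: $\Psi$ is not additive — for instance $\Psi(1)+\Psi(1)=(0,0)$ while $\Psi(2)=(1,1)$ — so $\Theta$ is not itself a $Z_4$-module homomorphism. The clean foundation is instead that the subtraction is absorbed by $\widetilde{\Psi}$, which is genuinely additive, while the residual nonlinear map $\Psi$ is nonetheless a Lee-to-Hamming isometry coordinate by coordinate. Once this separation is made, the four-coordinate situation for $S$ adds nothing beyond the two-coordinate situation already settled for $R$ in Lemma 20, and the argument goes through verbatim.
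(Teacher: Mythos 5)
Your treatment of the isometry claim is correct, and it is in fact a repair of the paper's argument rather than a reproduction of it. The paper proves the $R$-analogue (Lemma 19) by asserting the additivity $\breve{O}(c_{1}-c_{2})=\breve{O}(c_{1})-\breve{O}(c_{2})$ so as to reduce distance to weight, and then disposes of the present lemma with ``it is proved as in the proof of Lemma 19.'' As you observe, that additivity is false, because the classical Gray map $\Psi$ is not additive: $\Psi(1)+\Psi(1)=(0,0)$ while $\Psi(2)=(1,1)$, so e.g. $\breve{O}(2-1)=(0,1,0,0)$ but $\breve{O}(2)-\breve{O}(1)=(1,0,0,0)$. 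Your decomposition $\Theta=\Psi^{(4)}\circ\widetilde{\Psi}$, which pushes the subtraction into the genuinely additive coordinate map $\widetilde{\Psi}$ (so that $d_{L}(c_{1},c_{2})=w_{L}(\widetilde{\Psi}(c_{1})-\widetilde{\Psi}(c_{2}))$ by the very definition of the Lee weight on $S$) and only then invokes the coordinatewise Lee-to-Hamming isometry of $\Psi$ on $Z_{4}$, is the standard correct argument; it yields $d_{L}(c_{1},c_{2})=d_{H}(\Theta(c_{1}),\Theta(c_{2}))$ without ever needing $\Theta$ to be a homomorphism.

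The gap is in your second part, and it is an internal inconsistency. You propose to obtain $Z_{2}$-linearity by ``reproducing the coordinatewise verification'' of $\Theta(k_{1}c_{1}+k_{2}c_{2})=k_{1}\Theta(c_{1})+k_{2}\Theta(c_{2})$ for $k_{1},k_{2}\in Z_{2}$, i.e. by the paper's own calculation. That verification cannot be carried out: taking $k_{1}=k_{2}=1$ and $c_{1}=c_{2}=1\in S$, the identity asserts $\Theta(2)=\Theta(1)+\Theta(1)=0$, whereas $\Theta(2)=(1,1,0,0,0,0,0,0)\neq 0$ --- this is exactly the counterexample you exhibit two sentences earlier when noting that $\Psi$ is not additive. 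The same non-additivity that breaks the paper's one-line distance argument breaks the linearity claim, and no alternative argument can rescue it, since the assertion that $\Theta$ is $Z_{2}$-linear (read as additivity on $Z_{2}$-combinations) is simply false; the Gray map is a nonlinear isometry, which is precisely why Gray images of linear $Z_{4}$-codes can be nonlinear binary codes. The honest conclusion is that only the distance-preservation half of the lemma (and of Lemma 19) is correct as stated, and your first argument proves exactly that half; the linearity half should be dropped or reformulated, not ``verified.''
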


\begin{proof}
It is proved as in the proof of Lemma 19.
\end{proof}

\begin{proposition}
Let $\sigma $ be the cyclic shift of $S^{n}$ and $\overset{\prime }{\upsilon 
}$ be the 8-quasi-cyclic shift of $Z_{2}^{8n}$. Let $\Theta $ be the Gray
map from $S^{n}$ to $Z_{2}^{8n}$. Then $\Theta \sigma =\overset{\prime }{%
\upsilon }\Theta .$
\end{proposition}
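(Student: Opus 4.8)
The plan is to mirror the computation carried out for the ring $R$ in Proposition 20, since $\Theta $ plays exactly the same role for $S$ that $\breve{O}$ plays for $R$, the only difference being that each coordinate of $S^{n}$ now contributes a block of eight binary symbols rather than four. First I would take an arbitrary $c=(c_{0},c_{1},\dots ,c_{n-1})\in S^{n}$ and write each coordinate in its canonical form $c_{i}=a_{0i}+wa_{1i}+va_{2i}+wva_{3i}$ with $a_{ji}\in Z_{4}$. Applying $\Theta $ componentwise then expresses $\Theta (c)$ as the concatenation, over $i=0,1,\dots ,n-1$, of the eight-tuples $\left( \beta (a_{0i}),\gamma (a_{0i}),\beta (a_{1i}),\gamma (a_{1i}),\beta (a_{2i}),\gamma (a_{2i}),\beta (a_{3i}),\gamma (a_{3i})\right) $.

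Next I would apply the two shifts to opposite sides and compare the resulting strings symbol by symbol. On the one hand, the $8$-quasi-cyclic shift $\overset{\prime }{\upsilon }$ moves the final block of eight coordinates of $\Theta (c)$, namely the block indexed by $i=n-1$, to the front while preserving the cyclic order of the remaining blocks; thus $\overset{\prime }{\upsilon }(\Theta (c))$ begins with the eight symbols coming from $c_{n-1}$ and continues with those coming from $c_{0},\dots ,c_{n-2}$. On the other hand, $\sigma (c)=(c_{n-1},c_{0},c_{1},\dots ,c_{n-2})$, and since $\Theta $ acts coordinatewise, $\Theta (\sigma (c))$ is the concatenation of the same eight-tuples, now listed starting from the relocated coordinate $c_{n-1}$. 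The two explicit vectors in $Z_{2}^{8n}$ therefore coincide, which is precisely the asserted identity $\Theta \sigma =\overset{\prime }{\upsilon }\Theta $.

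The argument has no genuine conceptual obstacle; the one point requiring care is purely notational, namely keeping the index bookkeeping of the $8n$ binary coordinates straight so that the block boundaries of $\overset{\prime }{\upsilon }$ align exactly with the images of the coordinates of $S^{n}$. It is precisely the fact that each element of $S$ maps to a block of length eight, matched to the quasi-cyclic index $8$, that forces this alignment and makes the two computations agree. Consequently the proof can be written in one line referring back to the verification in Proposition 20, or it can be spelled out by exhibiting the two strings as above and noting their equality.
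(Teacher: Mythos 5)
Your proposal is correct and follows exactly the route the paper takes: the paper's proof of this proposition is literally ``It is proved as in the proof of Proposition 20,'' i.e.\ the coordinatewise computation for $R$ transplanted to $S$, with each coordinate $c_{i}=a_{0i}+wa_{1i}+va_{2i}+wva_{3i}$ expanding to an $8$-bit block so that the block structure of $\overset{\prime }{\upsilon }$ matches the image of $\Theta $. Your spelled-out version of that computation, including the observation that both $\Theta (\sigma (c))$ and $\overset{\prime }{\upsilon }(\Theta (c))$ are the same concatenation of $8$-tuples beginning with the block of $c_{n-1}$, is precisely what the cited argument does.
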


\begin{proof}
It is proved as in the proof of Proposition 20.
\end{proof}

\begin{theorem}
If $C$ is a cyclic DNA code of length $n$ over $S$ then $\Theta $ is a
binary quasi-cyclic DNA code of length $8n$ with index $8$.
\end{theorem}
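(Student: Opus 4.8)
The plan is to treat this as the exact $S$-analogue of Theorem 21, leaning on the two facts already established: Lemma 22, that $\Theta$ is a $Z_{2}$-linear, distance-preserving map from $S^{n}$ to $Z_{2}^{8n}$, and Proposition 23, that $\Theta\sigma=\upsilon^{\prime}\Theta$, where $\upsilon^{\prime}$ is the $8$-quasi-cyclic shift. The goal is to transfer three properties from $C$ to its image $\Theta(C)$: binary linearity, quasi-cyclicity of index $8$, and the DNA structure. (Strictly, the object claimed to be a code is $\Theta(C)$, not $\Theta$ itself.)

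First I would dispose of the linear and quasi-cyclic parts, which follow formally. Since $C$ is a cyclic code over $S$, it is $\sigma$-invariant, i.e. $\sigma(C)=C$. Applying $\Theta$ and invoking Proposition 23 gives
\begin{equation*}
\upsilon^{\prime}\bigl(\Theta(C)\bigr)=\Theta\bigl(\sigma(C)\bigr)=\Theta(C),
\end{equation*}
so $\Theta(C)$ is invariant under the $8$-quasi-cyclic shift and is therefore a quasi-cyclic code of length $8n$ and index $8$. Binary linearity is immediate from Lemma 22, since the $\Theta$-image of a linear code over $S$ is a $Z_{2}$-subspace of $Z_{2}^{8n}$.

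The remaining, and genuinely substantive, step is to argue that $\Theta(C)$ still deserves to be called a \emph{DNA} code. Here I would use the fixed one-to-one correspondence between the $256$ elements of $S$ and the codon quadruples in $\{A,T,G,C\}^{4}$, together with the binary encoding of codons tabulated in this section: $\Theta$ factors as $\Psi\circ\widetilde{\Psi}$, so the $8$-bit image of each element of $S$ is exactly the binary word assigned to the corresponding $4$-nucleotide codon. Consequently every codeword of $C$, read as a DNA word of length $4n$, is carried by $\Theta$ to its binary representation of length $8n$, and $\Theta(C)$ is precisely the binary image of a set of DNA sequences.

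The crux is then to check that the DNA constraints satisfied by $C$ survive under $\Theta$, and I would argue this at the level of the codon encoding rather than through additivity of $\Theta$ (which is only $Z_{2}$-linear in the restricted sense of Lemma 22). Reading off the table, each nucleotide carries a fixed $2$-bit label ($A=00$, $C=01$, $G=11$, $T=10$), and Watson--Crick complementation flips the leading bit of that label; hence a direct per-coordinate check gives $\Theta(\overline{c})=\Theta(c)+(1,0,1,0,1,0,1,0)$ for every $c\in S$, where the fixed pattern $(1,0,1,0,1,0,1,0)$ is exactly $\Theta\bigl((3+3w)+v(3+3w)\bigr)$, in agreement with Lemma 14. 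Thus complementation on $S$ descends to a fixed binary translation, and the reverse of an $S$-word descends to reversal of the length-$8$ blocks of its image. Combining these, the reverse-complement constraint on $C$ becomes, on $\Theta(C)$, a block-reversal-plus-translation condition compatible with the $8$-block structure tracked by $\upsilon^{\prime}$, which is the DNA condition for the binary image. I expect this reconciliation of complementation, block reversal, and the $8$-quasi-cyclic shift to be the main obstacle; by contrast, the binary-linear and quasi-cyclic claims are one-line consequences of Lemma 22 and Proposition 23.
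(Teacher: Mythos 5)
Your proposal is correct and takes essentially the same approach as the paper: the paper's entire proof is precisely your first step, namely $\sigma(C)=C$ combined with Proposition 23 to get $\upsilon^{\prime}(\Theta(C))=\Theta(\sigma(C))=\Theta(C)$, hence quasi-cyclicity of index $8$. Your additional verification that the codon/complement structure survives under $\Theta$ (via the identity $\Theta(\overline{c})=\Theta(c)+(1,0,1,0,1,0,1,0)$ and Lemma 14) goes beyond what the paper writes down — its proof stops at the shift-invariance argument — but it is consistent with the tables in this section and only strengthens the claim that $\Theta(C)$ is a DNA code.
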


\begin{proof}
Let $C$ be a cyclic DNA code of length $n$ over $S$. So, $\sigma (C)=C$. By
using the Proposition 23, we have $\Theta (\sigma (C))=\overset{\prime }{%
\upsilon }(\Theta (C))=\Theta (C)$. Hence $\Theta (C)$ is a set of length $8n
$ over the alphabet $Z_{2}$ which is a quasi-cyclic code of index 8.
\end{proof}

\section{Skew cyclic DNA\ codes over $R$}

In [6], the skew codes over $R$ were studied and the Gray images of them
were determined.

We will use the non trivial automorphism in [6]. For all $a+wb\in R$, it was
defined by 
\begin{eqnarray*}
\theta &:&R\longrightarrow R \\
a+wb &\longmapsto &a-wb
\end{eqnarray*}

The ring $R[x,\theta ]=\{a_{0}+a_{1}x+...+a_{n-1}x^{n-1}:a_{i}\in R,n\in N\}$
is called skew polynomial ring. It is non commutative ring. The addition in
the ring $R[x,\theta ]$ is the usual polynomial and multiplication is
defined as $(ax^{i})(bx^{j})=a\theta ^{i}(b)x^{i+j}.$ The order of the
automorphism $\theta $ is 2.

The following a definition and three theorems are in [6].

\begin{definition}
A subset $C$ of $R^{n}$ is called a skew cyclic code of length $n$ if $C$
satisfies the following conditions,

$i)$ $C$ is a submodule of $R^{n}$,

$ii)$ If $c=\left( c_{0},c_{1},...,c_{n-1}\right) \in C$, then $\sigma
_{\theta }\left( c\right) =\left( \theta (c_{n-1}),\theta (c_{0}),...,\theta
(c_{n-2})\right) \in C$
\end{definition}

Let $f(x)+\left\langle x^{n}-1\right\rangle $ be an element in the set $\ 
\check{R}_{n}=R\left[ x,\theta \right] /\left\langle x^{n}-1\right\rangle $
and let $r(x)\in R\left[ x,\theta \right] $. Define multiplication from left
as follows,

\begin{equation*}
r(x)(f(x)+\left\langle x^{n}-1\right\rangle )=r(x)f(x)+\left\langle
x^{n}-1\right\rangle
\end{equation*}%
for any $r(x)\in R\left[ x,\theta \right] $.

\begin{theorem}
$\check{R}_{n}$ is a left $R\left[ x,\theta \right] $-module where
multiplication defined as in above.
\end{theorem}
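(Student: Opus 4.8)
The plan is to regard $\check{R}_{n}=R[x,\theta]/\langle x^{n}-1\rangle$ as the set of left cosets of the left ideal generated by $x^{n}-1$ inside the skew polynomial ring $R[x,\theta]$; write $I=\langle x^{n}-1\rangle$ for brevity. To prove it is a left $R[x,\theta]$-module I would verify the three ingredients a left module requires: that coset addition makes $\check{R}_{n}$ an abelian group, that the left action $r(x)\cdot(f(x)+I)=r(x)f(x)+I$ is well defined, and that this action satisfies the module axioms. The first point is immediate, since $I$ is in particular an additive subgroup of $(R[x,\theta],+)$, so $\check{R}_{n}$ inherits the usual quotient group structure.

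The heart of the argument is the well-definedness of the left action, and this is exactly where the left-ideal property is used. First I would check independence of the coset representative: if $f(x)+I=g(x)+I$, then $f(x)-g(x)\in I$, and because $I$ is a left ideal, left multiplication by any $r(x)\in R[x,\theta]$ keeps us inside $I$, so $r(x)f(x)-r(x)g(x)=r(x)(f(x)-g(x))\in I$, whence $r(x)f(x)+I=r(x)g(x)+I$. Once this is settled, the module axioms follow directly from the ring axioms of $R[x,\theta]$: the two distributive laws give $(r_{1}(x)+r_{2}(x))\cdot(f(x)+I)=r_{1}(x)\cdot(f(x)+I)+r_{2}(x)\cdot(f(x)+I)$ and $r(x)\cdot\left((f_{1}(x)+I)+(f_{2}(x)+I)\right)=r(x)\cdot(f_{1}(x)+I)+r(x)\cdot(f_{2}(x)+I)$; associativity of multiplication in $R[x,\theta]$ gives the mixed associativity $(r_{1}(x)r_{2}(x))\cdot(f(x)+I)=r_{1}(x)\cdot\left(r_{2}(x)\cdot(f(x)+I)\right)$; and $1\cdot(f(x)+I)=f(x)+I$ is clear.

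The step that deserves genuine care, and the reason the statement is phrased as a module result rather than a ring result, is noncommutativity. Since $xa=\theta(a)x$ for $a\in R$, one computes $x^{n}a=\theta^{n}(a)x^{n}$, and because $\theta$ has order $2$ while $n$ is odd we have $\theta^{n}=\theta\neq\mathrm{id}$ (for instance $\theta(w)=-w=3w\neq w$). Thus $x^{n}$ does not commute with the scalars of $R$, so $x^{n}-1$ is not central and $I$ need not be a two-sided ideal; consequently $\check{R}_{n}$ carries no natural ring structure. The main obstacle is therefore simply to resist treating $I$ as two-sided: every manipulation must use left multiplication only and must invoke the left-ideal property precisely at the point where representative-independence is needed. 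Since a left module structure requires nothing beyond a left ideal, this establishes the theorem.
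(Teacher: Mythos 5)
Your proof is correct, but there is nothing in the paper to compare it against: this theorem is one of the results the paper explicitly imports from reference [6] (``The following a definition and three theorems are in [6]''), and no proof is given in the text. Your argument is the standard one and it is complete. The only point of real substance is the well-definedness of the action $r(x)\cdot\left(f(x)+I\right)=r(x)f(x)+I$, which you correctly reduce to the fact that $I=R[x,\theta]\left(x^{n}-1\right)$ is a \emph{left} ideal, so that $f-g\in I$ implies $r(f-g)\in I$; the module axioms then follow mechanically from the ring axioms of $R[x,\theta]$, as you say. Your closing observation is also the right one and explains why the statement is a module-theoretic rather than ring-theoretic claim: since $\theta$ has order $2$ and the relevant lengths $n$ in this paper are odd, $x^{n}a=\theta^{n}(a)x^{n}=\theta(a)x^{n}$ with $\theta\neq\mathrm{id}$ (e.g.\ $\theta(w)=3w$), so $x^{n}-1$ is not central, $I$ fails to be two-sided (indeed $(x^{n}-1)w\notin I$), and $\check{R}_{n}$ has no induced ring structure --- only the left $R[x,\theta]$-module structure you verified. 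In effect your write-up supplies the proof that the paper delegates to [6].
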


\begin{theorem}
A code $C$ over $R$ of length $n$ is a skew cyclic code if and only if $C$
is a left $R\left[ x,\theta \right] $-submodule of the left $R\left[
x,\theta \right] $-module $\check{R}_{n}$.
\end{theorem}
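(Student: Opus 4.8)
The plan is to transport everything through the standard $R$-linear identification of $R^{n}$ with $\check{R}_{n}$ that sends $\left(c_{0},c_{1},\dots,c_{n-1}\right)$ to $c_{0}+c_{1}x+\cdots+c_{n-1}x^{n-1}+\left\langle x^{n}-1\right\rangle$, and then to show that, under this identification, the skew cyclic shift $\sigma_{\theta}$ coincides exactly with left multiplication by $x$ in $\check{R}_{n}$. Once this single fact is in hand, the theorem becomes a purely formal comparison between the two defining conditions of a skew cyclic code and the closure conditions defining a left $R\left[x,\theta\right]$-submodule.

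The one computation carrying all the content is the evaluation of $x\cdot f(x)$ for $f(x)=\sum_{k=0}^{n-1}c_{k}x^{k}$. Using the twisted rule $(ax^{i})(bx^{j})=a\theta^{i}(b)x^{i+j}$, I get $x\cdot(c_{k}x^{k})=\theta(c_{k})x^{k+1}$, so that
\[
x\cdot f(x)=\theta(c_{0})x+\theta(c_{1})x^{2}+\cdots+\theta(c_{n-1})x^{n}.
\]
Reducing modulo $x^{n}-1$, i.e. replacing $x^{n}$ by $1$, moves the top term into the constant position and yields
\[
x\cdot f(x)\equiv\theta(c_{n-1})+\theta(c_{0})x+\cdots+\theta(c_{n-2})x^{n-1}\pmod{x^{n}-1},
\]
which is precisely the image of $\sigma_{\theta}(c)=\left(\theta(c_{n-1}),\theta(c_{0}),\dots,\theta(c_{n-2})\right)$. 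Thus $\sigma_{\theta}$ is nothing but left multiplication by $x$.

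For the forward direction, suppose $C$ is a skew cyclic code. The submodule condition $(i)$ makes $C$ an $R$-submodule, hence closed under addition and under left multiplication by the constants $r\in R\subseteq R\left[x,\theta\right]$; the shift condition $(ii)$ says $C$ is closed under $\sigma_{\theta}$, i.e. under left multiplication by $x$. Since any $r(x)=\sum_{i}r_{i}x^{i}\in R\left[x,\theta\right]$ acts by $r(x)\cdot f=\sum_{i}r_{i}\,(x^{i}\cdot f)$, where each $x^{i}\cdot f$ is obtained by iterating multiplication by $x$ and each $r_{i}\,(x^{i}\cdot f)$ is a constant multiple, additive closure together with closure under $x$ and under $R$ forces closure under every element of $R\left[x,\theta\right]$. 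Hence $C$ is a left $R\left[x,\theta\right]$-submodule. Conversely, if $C$ is a left $R\left[x,\theta\right]$-submodule, then restricting the action to the subring $R$ shows $C$ is an $R$-submodule (condition $(i)$), and closure under the distinguished element $x$ gives closure under $\sigma_{\theta}$ (condition $(ii)$); so $C$ is a skew cyclic code.

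The main obstacle here is bookkeeping rather than depth: one must keep the twist on the correct side so that left multiplication by $x$ produces $\theta$ applied to each coefficient and reads off in the order stated for $\sigma_{\theta}$, and one must invoke the preceding theorem that $\check{R}_{n}$ is a left $R\left[x,\theta\right]$-module so that the phrase ``left $R\left[x,\theta\right]$-submodule'' is meaningful. Since $\left\langle x^{n}-1\right\rangle$ is a left ideal, the quotient is automatically a left module and no centrality or two-sidedness condition on $x^{n}-1$ is needed for the left action; I would simply cite that theorem and carry out the verification above.
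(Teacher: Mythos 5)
Your proof is correct. Note, however, that the paper does not actually prove this statement: it is one of the three results quoted from reference [6] (``The following a definition and three theorems are in [6]''), so there is no in-paper argument to compare against; your write-up supplies the standard proof that [6] would contain. The key computation is right: with the rule $(ax^{i})(bx^{j})=a\theta ^{i}(b)x^{i+j}$, left multiplication by $x$ sends $\sum_{k}c_{k}x^{k}$ to $\theta (c_{n-1})+\theta (c_{0})x+\cdots +\theta (c_{n-2})x^{n-1}$ modulo $x^{n}-1$, which is exactly $\sigma _{\theta }$, and the rest is the routine matching of the two closure conditions. Your closing remark is also the right precaution: since $\theta $ has order $2$, the element $x^{n}-1$ is not central when $n$ is odd (the case the paper cares about), so $\left\langle x^{n}-1\right\rangle $ is only a left ideal and $\check{R}_{n}$ is only a left $R\left[ x,\theta \right] $-module --- which is all the preceding theorem of the paper claims, and all your argument uses.
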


\begin{theorem}
Let $C$ be a skew cyclic code over $R$ of length $n$ and let $f(x)$ be a
polynomial in $C$ of minimal degree. If $f(x)$ is monic polynomial, then $%
C=\left\langle f(x)\right\rangle ,$ where $f(x)$ is a right divisor of $%
x^{n}-1.$
\end{theorem}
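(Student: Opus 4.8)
The plan is to mimic the classical proof for cyclic codes over a field, running the right-division algorithm in the skew polynomial ring $R[x,\theta]$ and then invoking the minimality of $\deg f$; the only genuinely new point is that division by a \emph{monic} polynomial continues to work in this non-commutative setting over the chain ring $R$. First I would record the easy inclusion: since $f(x)\in C$ and, by the preceding characterization, $C$ is a left $R[x,\theta]$-submodule of $\check{R}_{n}$, every left multiple $q(x)\cdot[f(x)]=[q(x)f(x)]$ lies in $C$, so $\left\langle f(x)\right\rangle \subseteq C$. The content of the argument is the reverse inclusion.

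Before that I would establish the needed division statement: if $b(x)\in R[x,\theta]$ is monic of degree $k$ and $a(x)\in R[x,\theta]$ is arbitrary, then there exist $q(x),r(x)$ with $a(x)=q(x)b(x)+r(x)$ and either $r(x)=0$ or $\deg r(x)<k$. This is proved by induction on $\deg a$: if the leading term of $a$ is $\alpha x^{m}$ with $m\geq k$, subtract $\alpha x^{m-k}b(x)$; because the leading coefficient of $b$ is $1$ and $\theta$ is an automorphism (so $\theta^{m-k}(1)=1$), the polynomial $\alpha x^{m-k}b(x)$ has leading term $\alpha x^{m}$ and the difference drops in degree. Note this uses nothing about $R$ beyond $1$ being a unit, so it is valid over the finite chain ring $R$.

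Now take any codeword $[c(x)]\in C$, choose its representative $c(x)$ of degree $<n$, and right-divide by $f(x)$: $c(x)=q(x)f(x)+r(x)$ with $\deg r(x)<\deg f(x)$. A degree count gives $\deg(q(x)f(x))=\deg c(x)<n$, so $q(x)f(x)$ needs no reduction modulo $x^{n}-1$ and $[q(x)f(x)]=q(x)\cdot[f(x)]\in C$. Hence $[r(x)]=[c(x)]-[q(x)f(x)]\in C$; since $\deg r(x)<\deg f(x)$ and $f(x)$ has minimal degree among the nonzero elements of $C$, we must have $r(x)=0$, whence $[c(x)]=q(x)\cdot[f(x)]\in\left\langle f(x)\right\rangle$. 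This yields $C\subseteq\left\langle f(x)\right\rangle$, and therefore $C=\left\langle f(x)\right\rangle$. To see that $f(x)$ right-divides $x^{n}-1$, apply the same right division to $x^{n}-1$, obtaining $x^{n}-1=q(x)f(x)+r(x)$ with $\deg r(x)<\deg f(x)$; passing to $\check{R}_{n}$ and using $[x^{n}-1]=0$ gives $[r(x)]=-q(x)\cdot[f(x)]\in\left\langle f(x)\right\rangle\subseteq C$, so minimality of $\deg f(x)$ forces $r(x)=0$ and $x^{n}-1=q(x)f(x)$.

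The main obstacle is the first technical step, namely making the right-division algorithm legitimate in the non-commutative ring $R[x,\theta]$ over the non-field ring $R$: monicity of $f(x)$ (a unit leading coefficient) together with the automorphism property of $\theta$ is exactly what makes the leading-term cancellation and the degree bookkeeping go through. I would also be careful throughout to keep the left/right conventions straight, since the module action on $\check{R}_{n}$ is by multiplication from the left while the divisor $f(x)$ must be peeled off on the right.
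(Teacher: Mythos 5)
Your proof is correct, but there is nothing in this paper to compare it against: the statement is one of the three theorems the authors quote without proof from their reference [6] (``The following a definition and three theorems are in [6]''), so the paper contains no argument of its own for it. What you give is the standard one for skew cyclic codes: the right division algorithm in $R[x,\theta]$, which works precisely because $f(x)$ is monic and $\theta(1)=1$, followed by the minimal-degree argument to force the remainder to vanish, applied once to an arbitrary codeword and once to $x^{n}-1$ itself. Your attention to the two genuine subtleties --- that division must peel $f(x)$ off on the \emph{right} while the module action on $\check{R}_{n}$ is from the left, and that degree bookkeeping still works over the chain ring $R$ despite zero divisors (the leading coefficients cancel exactly, and a nonzero left multiple of $x^{n}-1$ has degree at least $n$, so degree-$<n$ representatives are unique) --- is exactly what is needed, and the proof stands as written.
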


For all $x\in R$, we have%
\begin{equation*}
\theta (x)+\theta (\overline{x})=3-3w
\end{equation*}

\begin{theorem}
Let $C=\left\langle f(x)\right\rangle $ be a skew cyclic code over $R,$
where $f(x)$ is a monic polynomial in $C$ of minimal degree. If $C$ is
reversible complement, the polynomial $f(x)$ is self reciprocal and $(3+3w)%
\frac{x^{n}-1}{x-1}\in C.$
\end{theorem}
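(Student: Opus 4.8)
The plan is to run the argument of Theorem 7 in the skew setting, the one new feature being that multiplication by $x$ in $\check{R}_{n}=R[x,\theta]/\langle x^{n}-1\rangle$ now carries the automorphism $\theta$. I would first dispose of the second assertion. Since $C$ is an $R[x,\theta]$-submodule it contains $0$, hence by reversible complementarity it contains $0^{rc}$, which is the word with every entry equal to $\overline{0}$. By Lemma 5 ($a+\overline{a}=3+3w$) we have $\overline{0}=3+3w$, so this word is $(3+3w)(1+x+\cdots+x^{n-1})=(3+3w)\frac{x^{n}-1}{x-1}$, and therefore $(3+3w)\frac{x^{n}-1}{x-1}\in C$.

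For self-reciprocity, write $f(x)=f_{0}+\cdots+f_{r-1}x^{r-1}+x^{r}$, monic of minimal degree $r$ (Theorem 28). I would form $f(x)^{rc}\in C$; as a length-$n$ word it carries $3+3w$ in the bottom $n-r-1$ coordinates and the complemented reversed coefficients $\overline{1},\overline{f}_{r-1},\ldots,\overline{f}_{0}$ in the top $r+1$ coordinates. Subtracting $f(x)^{rc}$ from the word $(3+3w)\frac{x^{n}-1}{x-1}\in C$ and applying $(3+3w)-\overline{a}=a$ (Lemma 5) coordinatewise produces the element $g(x)=x^{n-r-1}+f_{r-1}x^{n-r}+\cdots+f_{0}x^{n-1}\in C$, whose coefficients are exactly those of the reciprocal $f^{\ast}(x)$ lodged in the top positions $n-r-1,\ldots,n-1$.

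Next I would bring $g$ back down to degree $r$. In $\check{R}_{n}$ the element $x$ is invertible ($x\cdot x^{n-1}=1$) and left multiplication by $x$ is precisely the skew shift $\sigma_{\theta}$, so left-multiplying $g(x)$ by $x^{r+1}$ again lands in $C$ and yields an element of degree $r$. Because $f(x)$ is monic of minimal degree, right division by $f(x)$ in $R[x,\theta]$ leaves a remainder of degree $<r$ that lies in $C$, hence is $0$; matching degrees and leading coefficients then forces this degree-$r$ element to be $u\,f(x)$ for a unit $u\in R$. Unwinding this identity gives a constant $e$ with $f^{\ast}(x)=e\,f(x)$, i.e. $f$ is self-reciprocal.

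The main obstacle is exactly this third step: unlike the commutative cyclic case of Theorem 7, clearing the factor $x^{n-r-1}$ does not return $f^{\ast}(x)$ verbatim, since each of the $r+1$ applications of $x$ twists the coefficients by $\theta$ (and, consistently, sends the complement constant $3+3w$ to $\theta(3+3w)=3-3w$, which is the content of the displayed identity $\theta(x)+\theta(\overline{x})=3-3w$ recorded just before the statement). Controlling this twist is the delicate point: one must use $\theta^{2}=\mathrm{id}$ together with that identity to check that the $\theta^{r+1}$-twisted reciprocal is still a genuine scalar multiple of $f$, tracking the parity of $r+1$. A secondary thing to verify is that right division by the monic $f$ is legitimate in the noncommutative ring $R[x,\theta]$ and that the relevant constant term of $f$ is a unit, so that $u$ is a unit and the degree bookkeeping closes.
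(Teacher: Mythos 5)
Your first two steps coincide with the paper's proof: from $0^{rc}\in C$ you get $(3+3w)\frac{x^{n}-1}{x-1}\in C$, and subtracting $f^{rc}$ from it and applying $(3+3w)-\overline{a}=a$ plants the coefficients of $f^{\ast}$ in positions $n-r-1,\dots,n-1$. The gap is exactly the third step, and it is not the routine verification you hope for. Left-multiplying your element $g(x)$ by $x^{r+1}$ stays in $C$, but what lands in $C$ is $1+\theta^{r+1}(f_{r-1})x+\cdots+\theta^{r+1}(f_{0})x^{r}$. If $r+1$ is even this is $f^{\ast}(x)$ and your division argument closes. If $r+1$ is odd, the element of $C$ is the coefficient-wise twist $\theta(f^{\ast})$, and your minimal-degree/right-division argument can only yield $\theta(f^{\ast})=e\,f$, i.e. $f^{\ast}=\theta(e)\,\theta(f)$: a \emph{twisted} reciprocity, strictly weaker than the theorem's conclusion $f^{\ast}=ef$. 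For instance $f(x)=3+wx+x^{2}$ satisfies $\theta(f^{\ast})=3f$ (coefficient-wise $\theta$), yet $f^{\ast}=1+wx+3x^{2}$ is not a constant multiple of $f$, since that would force the constant $3$ and $3w\neq w$. So no bookkeeping with $\theta^{2}=\mathrm{id}$ and the identity $\theta(x)+\theta(\overline{x})=3-3w$ turns the twisted identity into self-reciprocity; that identity only commutes complementation past $\theta$, it does not remove the twist.

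The paper's proof dodges this by multiplying on the \emph{right} by $x^{t+1-n}$: right multiplication by a power of $x$ never twists the coefficients of the multiplicand, since $(cx^{k})(x^{m})=c\,\theta^{k}(1)x^{k+m}=cx^{k+m}$, so the paper lands exactly on $3f^{\ast}(x)\in C$ and finishes by division. What the paper leaves implicit is why the left submodule $C$ is stable under right multiplication; this is where oddness of $n$ (the standing assumption of the paper) enters: in $\check{R}_{n}$ one has $cx\equiv cx^{\,n+1}$ and $f(x)x^{\,n+1}=x^{\,n+1}f(x)$ because $n+1$ is even, whence $Cx\subseteq C$. Your route can be repaired by the same parity trick in one line: for $n$ odd, left multiplication by $x^{n}$ acts on $\check{R}_{n}$ as the coefficient-wise automorphism $\theta$, so $C$ is $\theta$-stable; applying this to your twisted element recovers $f^{\ast}\in C$ untwisted, and then your argument concludes. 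As written, however, the case of odd $r+1$ is an unproved (and by the example above, unprovable-by-your-stated-means) half of the theorem, so the proposal is incomplete.
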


\begin{proof}
Let $C=\left\langle f(x)\right\rangle $ be a skew cyclic code over $R,$
where $f(x)$ is a monic polynomial in $C$. Since $\left( 0,0,...,0\right)
\in C$ and $C$ is reversible complement, we have $\left( \overline{0},%
\overline{0},...,\overline{0}\right) =(3+3w,3+3w,...,3+3w)\in C.$

Let $f(x)=1+a_{1}x+...+a_{t-1}x^{t-1}+x^{t}$. Since $C$ is reversible
complement, we have $f^{rc}(x)\in C$. That is 
\begin{eqnarray*}
f^{rc}(x) &=&(3+3w)+(3+3w)x+...+(3+3w)x^{n-t-2}+(2+3w)x^{n-t-1}+ \\
&&\overline{a}_{t-1}x^{n-t}+...+\overline{a}_{1}x^{n-2}+(2+3w)x^{n-1}
\end{eqnarray*}

Since $C$ is a linear code, we have $f^{rc}(x)-(3+3w)\frac{x^{n}-1}{x-1}\in
C $. This implies that 
\begin{equation*}
-x^{n-t-1}+(\overline{a}_{t-1}-(3+3w))x^{n-t}+...+(\overline{a}%
_{1}-(3+3w))x^{n-2}-x^{n-1}\in C
\end{equation*}%
multiplying on the right by $x^{t+1-n}$, we have 
\begin{equation*}
-1+(\overline{a}_{t-1}-(3+3w))\theta (1)x+...+(\overline{a}%
_{1}-(3+3w))\theta ^{t-1}(1)x^{t-1}-\theta ^{t}(1)x^{t}\in C
\end{equation*}%
By using $a+\overline{a}=3+3w$, we have%
\begin{equation*}
-1-a_{t-1}x-a_{t-2}x^{2}-...-a_{1}x^{t-1}-x^{t}=3f^{\ast }(x)\in C
\end{equation*}

Since $C=\left\langle f(x)\right\rangle $, there exist $q(x)\in R\left[
x,\theta \right] $ such that $3f^{\ast }(x)=q(x)f(x)$. Since $\deg f(x)=\deg
f^{\ast }(x),$ we have $q(x)=1$. Since $3f^{\ast }(x)=f(x)$, we have $%
f^{\ast }(x)=3f(x)$. So, $f(x)$ is self reciprocal.
\end{proof}

\begin{theorem}
Let $C=\left\langle f(x)\right\rangle $ be a skew cyclic code over $R$,
where $f(x)$ is a monic polynomial in $C$ of minimal degree. If $(3+3w)\frac{%
x^{n}-1}{x-1}\in C$ and $f(x)$ is self reciprocal, then $C$ is reversible
complement.
\end{theorem}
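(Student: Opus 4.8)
The plan is to run the argument of the preceding theorem in reverse, mirroring the two–step scheme used for the commutative cyclic case (Theorem 8). Fix an arbitrary codeword $c(x)\in C$ of degree $t$ and write $c(x)=u(x)f(x)$ as a left multiple of the generator. First I would record the algebraic bridge between the reverse-complement operation and the ordinary reciprocal: since $a+\overline{a}=3+3w$ for every $a\in R$, the reverse complement of any length-$n$ vector satisfies $c(x)^{rc}=(3+3w)\tfrac{x^{n}-1}{x-1}-c^{R}(x)$, where $c^{R}$ denotes the plain length-$n$ reversal of $c$. Because $(3+3w)\tfrac{x^{n}-1}{x-1}\in C$ by hypothesis and $C$ is $R$-linear, the equivalence $c^{rc}\in C\Leftrightarrow c^{R}\in C$ reduces the whole statement to showing that $C$ is closed under $c\mapsto c^{R}$, that is (up to a power-of-$x$ shift) under the reciprocal $c\mapsto c^{\ast}$.

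The second step is to show $c^{\ast}(x)\in C$. Here I would use self-reciprocity of the generator: $f^{\ast}(x)=mf(x)$ with $m\in Z_{4}^{\ast}$, so $f^{\ast}$ and $f$ generate the same skew-cyclic code. Expanding $c=uf$ with the multiplication rule $(ax^{i})(bx^{j})=a\theta^{i}(b)x^{i+j}$ and passing to reciprocals, $c^{\ast}$ should become a left multiple of $f^{\ast}=mf$, hence of $f$; combined with the fact that $C$ is invariant under the skew shift $\sigma_{\theta}$ (left multiplication by $x$, from the theorem characterizing skew cyclic codes), this places the relevant shift of $c^{\ast}$ back in $C$. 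Feeding this into the bridge of the first step, and using the complement identity $\theta(x)+\theta(\overline{x})=3-3w$ to absorb the conjugations produced by the shifts, then yields $c(x)^{rc}\in C$.

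The main obstacle is the skew reciprocal-of-product identity. In the commutative setting one has the clean rule $(uf)^{\ast}=u^{\ast}f^{\ast}$ (Lemma 4), but over $R[x,\theta]$ the twist $\theta^{i}$ in $(ax^{i})(bx^{j})=a\theta^{i}(b)x^{i+j}$ forces extra factors of $\theta$ into the reversal, so that reversing $f$ yields (up to a unit and a power of $x$) the coefficientwise conjugate $\theta^{\deg f}(f)$ rather than $f$ itself. The crux is therefore to reconcile $\theta^{\deg f}(f)$ with $f$: this is immediate when $\deg f$ is even, and when $\deg f$ is odd it must be handled using that $\theta$ has order $2$ and that $n$ is odd, so that $\theta^{\,n-1}=\mathrm{id}$ while $\theta^{n}=\theta$ governs the wrap-around modulo $x^{n}-1$, together again with $\theta(x)+\theta(\overline{x})=3-3w$. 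Once this parity bookkeeping is settled, the inclusion $c^{rc}\in C$ follows for every $c\in C$, proving that $C$ is reversible complement.
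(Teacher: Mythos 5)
Your plan coincides with the paper's own proof of this theorem: both arguments (i) write $c(x)=q(x)f(x)$ and use self-reciprocity of $f$ to conclude that $C$ is closed under the reciprocal map, and (ii) subtract a suitably shifted codeword from $(3+3w)\frac{x^{n}-1}{x-1}$ and apply the identity $a+\overline{a}=3+3w$ to turn that closure into $c^{rc}\in C$; your ``bridge'' equivalence $c^{rc}\in C\Leftrightarrow c^{R}\in C$ is exactly the paper's subtraction step, stated in both directions. The only point of divergence is the ``main obstacle'' you single out, namely that $(qf)^{\ast}=q^{\ast}f^{\ast}$ is a commutative identity and that skew shifts introduce $\theta$-twists on coefficients. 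The paper simply does not engage with this: it invokes Lemma 4 verbatim (proved for $R[x]$, not $R[x,\theta]$) and treats $c(x)x^{n-t-1}$ as an untwisted shift producing $(0,\ldots,0,c_{0},\ldots,c_{t})$, so the parity bookkeeping you describe --- and leave as a sketch --- has no counterpart in the published argument. In short, you have reproduced the paper's proof and, in flagging the skew-twist issue, identified a genuine lacuna in the paper itself rather than a gap peculiar to your own attempt; measured against the paper's standard of rigor, your proposal is the same proof, done more honestly.
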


\begin{proof}
Let $f(x)=1+a_{1}x+...+a_{t-1}x^{t-1}+x^{t}$ be a monic polynomial of the
minimal degree.

Let $c(x)\in C$. So, $c(x)=q(x)f(x)$, where $q(x)\in R[x,\theta ]$. By using
Lemma 4, we have $c^{\ast }(x)=(q(x)f(x))^{\ast }=q^{\ast }(x)f^{\ast }(x).$
Since $f(x)$ is self reciprocal, so $c^{\ast }(x)=q^{\ast }(x)ef(x)$, where $%
e\in Z_{4}\backslash \{0\}$. Therefore $c^{\ast }(x)\in C=\left\langle
f(x)\right\rangle $. Let $c(x)=c_{0}+c_{1}x+...+c_{t}x^{t}\in C.$ Since $C$
is a cyclic code, we get%
\begin{equation*}
c(x)x^{n-t-1}=c_{0}x^{n-t-1}+c_{1}x^{n-t}+...+c_{t}x^{n-1}\in C
\end{equation*}%
The vector correspond to this polynomial is 
\begin{equation*}
(0,0,...,0,c_{0},c_{1},...,c_{t})\in C
\end{equation*}%
Since $(3+3w,3+3w,...,3+3w)\in C$ and $C$ linear, we have%
\begin{eqnarray*}
(3+3w,3+3w,...,3+3w)-(0,0,...,0,c_{0},c_{1},...,c_{t}) &=&(3+3w,...,3+3w, \\
(3+3w)-c_{0},...,(3+3w)-c_{t}) &\in &C
\end{eqnarray*}%
By using $a+\overline{a}=3+3w$, we get 
\begin{equation*}
(3+3w,3+3w,...,3+3w,\overline{c}_{0},...,\overline{c}_{t})\in C
\end{equation*}%
which is equal to $\left( c(x)^{\ast }\right) ^{rc}$. This shows that $%
\left( c(x)^{\ast }\right) ^{rc}=c(x)^{rc}\in C.$
\end{proof}

\section{DNA codes over $S$}

\begin{definition}
Let $f_{1}$ and $f_{2}$ be polynomials with $\deg f_{1}=t_{1},\deg
f_{2}=t_{2}$ and both dividing $x^{n}-1\in R$

Let $m=\min \{n-t_{1},n-t_{2}\}$ and $f(x)=vf_{1}(x)+(1-v)f_{2}(x)$ over $S$%
. The set $L(f)$ is called a $\Gamma $-set where the automorphism $\Gamma $
is defined as follows;%
\begin{eqnarray*}
\Gamma &:&S\longrightarrow S \\
a+wb+vc+wvd &\longmapsto &a+b+w(b+d)-vc-wvdc
\end{eqnarray*}

The set $L(f)$ is defined as $L(f)=\{E_{0},E_{1},...,E_{m-1}\}$, where%
\begin{equation*}
E_{i}=\left\{ 
\begin{array}{c}
x^{i}f\text{ \ \ if }i\text{ is even} \\ 
x^{i}\Gamma (f)\text{ if }i\text{ is odd}%
\end{array}%
\right.
\end{equation*}

$L(f)$ generates a linear code $C$ over $S$ denoted by $C=\left\langle
f\right\rangle _{\Gamma }$. Let $f(x)=a_{0}+a_{1}x+...+a_{t}x^{t}$ be over $%
S $ and $S$-submodule generated by $L(f)$ is generated by the following
matrix%
\begin{equation*}
L(f)=\left[ 
\begin{tabular}{llllllllll}
$a_{0}$ & $a_{1}$ & $a_{2}$ & $\cdots $ & $a_{t}$ & $0$ & $\cdots $ & $%
\cdots $ & $\cdots $ & $0$ \\ 
$0$ & $\Gamma (a_{0})$ & $\Gamma (a_{1})$ & $\cdots $ & $\cdots $ & $\Gamma
(a_{t})$ & $0$ & $\cdots $ & $\cdots $ & $0$ \\ 
$0$ & $0$ & $a_{0}$ & $a_{1}$ & $\cdots $ & $\cdots $ & $a_{t}$ & $0$ & $%
\cdots $ & $0$ \\ 
$0$ & $0$ & $0$ & $\Gamma (a_{0})$ & $\Gamma (a_{1})$ & $\cdots $ & $\cdots $
& $\Gamma (a_{t})$ & $\cdots $ & $0$ \\ 
$\vdots $ & $\cdots $ & $\cdots $ & $\cdots $ & $\vdots $ & $\cdots $ & $%
\cdots $ & $\cdots $ & $\cdots $ & $\vdots $%
\end{tabular}%
\right]
\end{equation*}
\end{definition}

\begin{theorem}
Let $f_{1}$ and $f_{2}$ be self reciprocal polynomials dividing $x^{n}-1$
over $R$ with degree $t_{1}$ and $t_{2}$, respectively. If $f_{1}=f_{2},$
then $f=vf_{1}+(1-v)f_{2}$ and $\left\vert \left\langle L(f)\right\rangle
\right\vert =256^{m}.$ $C=\left\langle L(f)\right\rangle $ is a linear code
over $S$ and $\Theta (C)$ is a reversible DNA code.
\end{theorem}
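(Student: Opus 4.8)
The plan is to clear away the algebraic bookkeeping first and then isolate DNA-reversibility as the real content. Since $f_{1}=f_{2}$, the orthogonal idempotents $v,1-v$ collapse the defining expression to $f=vf_{1}+(1-v)f_{2}=f_{1}=f_{2}$, so $t_{1}=t_{2}$, $m=n-t_{1}=n-t_{2}$, and $f$ is a polynomial whose coefficients already lie in $R\subset S$. By construction $C=\langle L(f)\rangle$ is the $S$-span of the rows $E_{0},\dots,E_{m-1}$, hence an $S$-submodule of $S^{n}$; this is exactly what ``linear code over $S$'' means, so no separate argument is needed. For the cardinality I would read off the staircase shape of the generator matrix: row $E_{i}$ has its first nonzero entry in column $i$, equal to $a_{0}$ when $i$ is even and to $\Gamma(a_{0})$ when $i$ is odd. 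Because $f\mid x^{n}-1$ over $R$, writing $x^{n}-1=f(x)h(x)$ and comparing constant terms gives $a_{0}h(0)=-1$, so $a_{0}$ is a unit of $R$ and therefore of $S$; as $\Gamma$ is a ring automorphism, $\Gamma(a_{0})$ is a unit as well. A triangular elimination from the leftmost column inward then forces any vanishing combination $\sum_{i}s_{i}E_{i}=0$ to have all $s_{i}=0$, so the $m$ rows are $S$-independent, $C$ is free of rank $m$, and $\left\vert C\right\vert =\left\vert S\right\vert ^{m}=256^{m}$.

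The substantive claim is that $\Theta(C)$ is a reversible DNA code. Here I would first translate reversal at the DNA level into an operation on $S^{n}$: under $\Theta$ (equivalently under the codon correspondence) each symbol of $S$ becomes four nucleotides, so reversing a DNA word of length $4n$ both reverses the order of the $n$ symbols and reverses the four nucleotides inside each symbol. The construction of $L(f)$ is tailored to absorb both effects simultaneously: the self-reciprocal identity $f^{\ast}(x)=ef(x)$ with $e\in Z_{4}^{\ast}$ sends a shifted generator $x^{i}f$ to another shift of $f$, matching the reversal of symbol order, while alternating $f$ with $\Gamma(f)$ along consecutive rows accounts for the within-symbol nucleotide reversal on the odd-indexed generators. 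Thus I would check that the DNA-reverse of each $E_{i}$ is again an $S$-linear combination of $E_{0},\dots,E_{m-1}$; since reversal is $Z_{2}$-linear (Lemma~19, Proposition~20 give the relevant compatibility) and $C$ is $S$-linear, closure on the generators propagates to all of $C$. Finally, the codon map $S\leftrightarrow\{A,T,G,C\}^{4}$ being a bijection guarantees that every codeword is a genuine DNA word, so $\Theta(C)$ is a DNA code, and together with the closure just established it is reversible.

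The main obstacle is precisely this reversibility step, namely the interplay between $\Gamma$ and $f^{\ast}$. The delicate point is to verify that $\Gamma$ implements exactly the within-codon nucleotide reversal at the level of $S$, and that applying it to the odd rows while using $f^{\ast}=ef$ on the even rows reproduces the \emph{same} reversed generating family rather than an unrelated submodule. The calculation amounts to tracking, after reversal, which shift and which of $a_{0},\Gamma(a_{0})$ lands in each column, so that the reversed $E_{i}$ is re-expressed in the basis $\{E_{j}\}$; self-reciprocity supplies the unit scalar $e$ that makes the shifts line up, and the even/odd alternation supplies the $\Gamma$ needed on the interior coefficients. Once this matching is confirmed on the generators, the remainder of the argument is formal.
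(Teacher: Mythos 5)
Your preliminary work is correct, and in fact more explicit than anything in the paper: the paper's entire proof of this theorem is the single sentence ``It is proved as in the proof of the Theorem 5 in [4]'', a deferral to Bayram--Oztas--Siap. Your collapse $f=vf_{1}+(1-v)f_{2}=f_{1}$ when $f_{1}=f_{2}$, your observation that linearity of $C=\left\langle L(f)\right\rangle$ is automatic from its definition as an $S$-span, and your cardinality count $\left\vert C\right\vert=256^{m}$ (unit constant term $a_{0}$ from $f\mid x^{n}-1$, then triangular elimination using that $\Gamma(a_{0})$ is also a unit) are all sound. The problem is that the claim carrying the theorem's entire content --- that $\Theta(C)$ is a \emph{reversible} DNA code --- is never actually proved in your text. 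You describe what ``would'' need to be checked, you correctly label it ``the main obstacle,'' and you stop there. A proof plan that defers exactly the step where all the substance lives is a gap, not a proof.

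Moreover, the one reduction you do offer for that step is unsound as stated. You argue that since reversal is $Z_{2}$-linear and $C$ is $S$-linear, ``closure on the generators propagates to all of $C$.'' It does not. Reversal of the DNA image corresponds, on $S^{n}$, to reversing the coordinate order and applying in each coordinate the map $\rho:S\rightarrow S$ determined by requiring $\Theta(\rho(s))$ to be the nucleotide reversal of $\Theta(s)$; concretely $\rho(a_{0}+wa_{1}+va_{2}+wva_{3})=a_{3}+wa_{2}+va_{1}+wva_{0}$. This $\rho$ is additive but not $S$-linear, and not even unital (one checks $\rho(1)=wv$), so a general codeword $\sum_{i}s_{i}E_{i}$ with $s_{i}\in S$ cannot be handled by applying reversal to the $E_{i}$ and invoking linearity: you would need a semilinear identity of the form ``reversal of $su$ equals $\Gamma(s)$ times the reversal of $u$,'' which is precisely the interplay between $\Gamma$ and $\rho$ that you postponed, together with the explicit computation (using self-reciprocity of $f_{1}$ and the even/odd alternation) that each reversed generator lands back in $C$. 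Your citation of Lemma 19 and Proposition 20 does not supply any of this; those results concern distance preservation and cyclic shifts, not reversal. Note finally that the paper's displayed formula for $\Gamma$, namely $a+wb+vc+wvd\mapsto a+b+w(b+d)-vc-wvdc$, contains a term quadratic in the coefficients ($wvdc$), so as printed $\Gamma$ is not even additive; any complete proof must first repair this definition, which underscores that the verification you deferred is where the real work of this theorem resides.
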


\begin{proof}
It is proved as in the proof of the Theorem 5 in [4].
\end{proof}

\begin{corollary}
Let $f_{1}$ and $f_{2}$ be self reciprocal polynomials dividing $x^{n}-1$
over $R$ and $C=\left\langle L(f)\right\rangle $ be a cyclic code over $S.$
If $\frac{x^{n}-1}{x-1}\in C$, then $\Theta (C)$ is a reversible complement
DNA code.
\end{corollary}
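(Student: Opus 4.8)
The plan is to build directly on the preceding theorem, which already certifies that $\Theta(C)$ is a reversible DNA code; since the reverse complement of a DNA word is the Watson--Crick complement of its reverse, it suffices to show \emph{in addition} that $\Theta(C)$ is closed under nucleotide-wise complementation. Concretely, if for every DNA codeword $d\in\Theta(C)$ one has $d^{r}\in\Theta(C)$ (reversibility, from the theorem) and the complement of any word of $\Theta(C)$ again lies in $\Theta(C)$, then $d^{rc}=\overline{d^{r}}\in\Theta(C)$, so $\Theta(C)$ is reversible complement. Thus the whole task reduces to proving complement closure.

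First I would translate the complement on DNA strings into an algebraic operation on $S^{n}$. Reading off the codon tables, the Watson--Crick complement acts on a single $Z_{4}$ coordinate by $c\mapsto 3-c$, hence on $S$ it acts coordinate-wise under $\widetilde{\Psi}$; by Lemma 14 this is exactly $\overline{a}=(3+3w)(1+v)-a$ for every $a\in S$. Consequently, for $x=(x_{0},\dots,x_{n-1})\in S^{n}$ one has $\Theta(x^{c})=\overline{\Theta(x)}$, where $x^{c}=z-x$ and $z:=(\overline{0},\dots,\overline{0})$. Therefore $\Theta(C)$ being complement-closed is equivalent to $C$ being closed under the affine map $x\mapsto z-x$.

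Next I would use the hypothesis $\tfrac{x^{n}-1}{x-1}\in C$. This polynomial is the all-ones vector $(1,\dots,1)$, and since $C=\langle L(f)\rangle$ is a linear code, i.e. an $S$-submodule of $S^{n}$ (by the preceding theorem), multiplying by the scalar $(3+3w)(1+v)\in S$ gives $(3+3w)(1+v)\tfrac{x^{n}-1}{x-1}=z\in C$. Then for any $x\in C$, linearity yields $x^{c}=z-x\in C$, so $C$, and hence $\Theta(C)$, is complement-closed. Combining this with the reversibility supplied by the preceding theorem shows that $\Theta(C)$ is a reversible complement DNA code. Alternatively, one could route the argument through Theorem 16: after writing $C=vC_{1}\oplus(1-v)C_{2}$ with $C_{1},C_{2}$ cyclic and reversible over $R$, the containment $z\in C$ promotes ``reversible'' to ``reversible complement'' over $S$.

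The hard part will be keeping the two notions of ``reverse'' apart: the reverse of a DNA word of length $4n$ also reverses the four nucleotides inside each length-$4$ block, whereas the block-reverse $x^{r}$ on $S^{n}$ does not. That within-block reversal is precisely what the $\Gamma$-twisted generating set $L(f)$ together with the self-reciprocity of $f_{1}=f_{2}$ is engineered to absorb, and it is this that the preceding theorem guarantees. So the genuinely new content is only the complement closure, which $\tfrac{x^{n}-1}{x-1}\in C$ delivers cleanly; the main care needed is to confirm that complementation commutes with the full (nucleotide-level) reversal, so that ``reversible $+$ complement-closed $\Rightarrow$ reversible complement'' is valid at the DNA level and not merely at the coarser $S$-level.
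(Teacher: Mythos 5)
Your proposal is correct; in fact the paper states this corollary without any proof at all, and your argument --- DNA-level reversibility supplied by the preceding theorem, plus complement closure obtained by multiplying $\frac{x^{n}-1}{x-1}$ by the scalar $(3+3w)(1+v)=\overline{0}$ and using $S$-linearity of $C$, so that $x^{c}=z-x\in C$ for all $x\in C$ --- is precisely the intended completion, correctly exploiting that complementation (unlike reversal) is position-wise and therefore transfers between the $S$-level and the nucleotide level. One caution: your parenthetical alternative via Theorem 16 would only give reverse-complement closure at the coarser block ($S^{n}$) level, not the nucleotide level, so it cannot substitute for your main route; but as you yourself note in the final paragraph, your primary argument keeps the two notions of reversal separate and is sound.
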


\begin{example}
Let $f_{1}(x)=f_{2}(x)=x-1$ dividing $x^{7}-1$ over $R.$ Hence, $%
C=\left\langle vf_{1}(x)+(1-v)f_{2}(x)\right\rangle _{\Gamma }=\left\langle
x-1\right\rangle _{\Gamma }$ is a $\Gamma $-linear code over $S$ and $\Theta
(C)$ is a reversible complement DNA code, because of $\frac{x^{n}-1}{x-1}\in
C.$
\end{example}

\section{References}

\ \ [1] Abualrub T., Ghrayeb A., Zeng X., Construction of cyclic codes over $%
GF(4)$ for DNA computing, J. Franklin Institute, 343, 448-457, 2006.

[2] Abualrub T., Siap I., Reversible quaternary cyclic codes, Proc. of the
9th WSEAS Int. Conference on Appl. Math., Istanbul, 441-446, 2006.

[3] Adleman L., Molecular computation of the solution to combinatorial
problems, Science, 266, 1021-1024, 1994.

[4] Bayram A., Oztas E., Siap I., Codes over $F_{4}+vF_{4}$ and some DNA
applications, Designs, Codes and Cryptography, DOI:
10.107/s10623-015-0100-8, 2015.

[5] Bennenni N., Guenda K., Mesnager S., New DNA cyclic codes over rings,
arXiv: 1505.06263v1, 2015.

[6] Dertli A., Cengellenmis Y., Eren S., On the codes over the $Z_{4}+wZ_{4}$%
; Self dual codes, Macwilliams identities, Cyclic, constacyclic and
quasi-cyclic codes, their skew codes, Int. J. of Foundations of Computer
Science, to \ be submitted, 2016.

[7] Gaborit P., King O. D., Linear construction for DNA codes, Theor.
Computer Science, 334, 99-113, 2005.

[8] Guenda K., Gulliver T. A., Sole P,. On cyclic DNA codes, Proc., IEEE
Int. Symp. Inform. Theory, Istanbul, 121-125, 2013.

[9] Guenda K., Gulliver T. A., Construction of cyclic codes over $%
F_{2}+uF_{2}$ for DNA computing, AAECC, 24, 445-459, 2013.

[10] Liang J., Wang L., On cyclic DNA codes over $F_{2}+uF_{2},$ J. Appl.
Math. Comput., DOI: 10.1007/s12190-015-0892-8, 2015.

[11] Ma F., Yonglin C., Jian G., On cyclic DNA codes over $%
F_{4}[u]/\left\langle u^{2}+1\right\rangle $.

[12] Massey J. L., Reversible codes, Inf. Control, 7, 369-380, 1964.

[13] Oztas E. S., Siap I., Lifted polynomials over $F_{16}$ and their
applications to DNA codes, Filomat, 27, 459-466, 2013.

[14] Pattanayak S., Singh A. K., On cyclic DNA\ codes over the ring $%
Z_{4}+uZ_{4},$ arXiv: 1508.02015, 2015.

[15] Pattanayak S., Singh A. K., Kumar P., DNA cyclic codes over the ring $%
F_{2}[u,v]/\left\langle u^{2}-1,v^{3}-v,uv-vu\right\rangle ,$
arXiv:1511.03937, 2015.

[16] Pattanayak S., Singh A. K., Construction of Cyclic DNA codes over the
ring $Z_{4}[u]/\left\langle u^{2}-1\right\rangle $ based on deletion
distance, arXiv: 1603.04055v1, 2016.

[17] Siap I., Abualrub T., Ghrayeb A., Cyclic DNA codes over the ring $%
F_{2}[u]/\left( u^{2}-1\right) $ based on the delition distance, J. Franklin
Institute, 346, 731-740, 2009.

[18] Siap I., Abualrub T., Ghrayeb A., Similarity cyclic DNA codes over
rings, IEEE, 978-1-4244-1748-3, 2008.

[19] Y\i ld\i z B., Siap I., Cyclic DNA codes over the ring $F_{2}[u]/\left(
u^{4}-1\right) $ and applications to DNA codes, Comput. Math. Appl., 63,
1169-1176, 2012.

[20] Zhu S., Chen X., Cyclic DNA codes over $F_{2}+uF_{2}+vF_{2}+uvF_{2},$
arXiv: 1508.07113v1, 2015.

\end{document}